\newtheorem{theorem}{Theorem}
\newtheorem{proposition}[theorem]{Proposition}
\newtheorem{lemma}[theorem]{Lemma}
\newtheorem{corollary}[theorem]{Corollary}
\newcommand{\abs}[1]{\lvert#1\rvert}
\newcommand{\ba}{\begin{array}}
\newcommand{\ea}{\end{array}}
\newcommand{\be}{\begin{equation}}
\newcommand{\ee}{\end{equation}}
\newcommand{\eps}{\varepsilon}
\newcommand{\mc}{\mathcal}
\newcommand{\1}{\mathbbm{1}}
\newcommand{\E}{\mathbb{E}}
\newcommand{\N}{\mathbb{N}}
\renewcommand{\P}{\mathbb{P}}
\def\1{\mathds{1}}
\def\N{\mathbb{N}}
\def\E{\mathbb{E}}
\def\P{\mathbb{P}}
\title{Diffusion of innovation in large scale graphs}
\author{Fabio Fagnani\thanks{Fabio Fagnani is with the Department of Mathematical Science, Polytechnic University of Turin, Torino, Italy
        {\tt\small fabio.fagnani@polito.it}} \and Lorenzo Zino\thanks{Lorenzo Zino is with the Department of Mathematical Science, Polytechnic University of Turin,
        Torino, Italy
        {\tt\small lorenzo.zino@unito.it}}%
}
\begin{document}
\maketitle

\begin{abstract}
Will a new smartphone application diffuse deeply in the population or will it sink into oblivion soon? To predict this, we argue that common models of spread of innovations based on cascade dynamics or epidemics may not be fully adequate. Therefore we propose a novel stochastic network dynamics modeling the spread of a new technological asset, whose adoption is based on the word-of-mouth and the persuasion strength increases the more the product is diffused. In this paper we carry on an analysis on large scale graphs to show off how the parameters of the model, the topology of the graph and, possibly, the initial diffusion of the asset, determine whether the spread of the asset is successful or not. In particular, by means of stochastic dominations and deterministic approximations, we provide some general results for a large class of expansive graphs. Finally we present numerical simulations trying to expand the analytical results we proved to even more general topologies.

\end{abstract}

\section{Introduction}

In this paper we extend the analysis of a novel network dynamics that models the diffusion of the adoption of a new technological feature in a population, to the case of a general network topology. According to this dynamics, introduced in earlier work \cite{mtns}, we imagine that agents (individuals) are connected through a network and update their state (representing whether they use the new asset or not) at random times, according to two different mechanisms: a gossip persuasion mechanism which pushes adoption by contacting a neighbor already possessing the new technology, and a spontaneous regression mechanism where an agent autonomously abandons the new technology.

In many papers (\cite{valente}-\cite{young}), the spread of innovations and of new behaviors is modeled and studied under the assumption that each individual tends to maximize his or her payoff in a coordination game within his or her neighborhood. This hypothesis seems to be very realistic as far as we are concerned with what we might call ``big choices'', such as the terms of economic contracts \cite{young} or the choice of an operating system \cite{montanari}. In such cases, a wrong decision can be very costly for the one who took it, therefore it is more than reasonable that an individual contacts many of his or her friends/colleagues (i.e. the neighbors) before taking the ``big choice''. In this work we want to focus on those we might call ``light choices'', for example to download an application on the mobile phone or to join an online community/social network. In such cases, negative consequences of the choice are usually mild, hence we can assume that an individual takes his or her choice after a pairwise interaction with one of its neighbors who has already taken it (a recent survey \cite{nie} supports our hypothesis highlighting the centrality of the world-of-mouth in the spread of assets), instead of an analysis of its whole neighborhood.

The main novelty of this model, with respect to classical epidemic models (see \cite{and}, \cite{pastor}), lays in the fact that the strength of the gossip persuasion depends on the global diffusion that the new item already has in the population. Neighbors are thus sort of media through which an agent gets aware of the existence of this new item, but it is the size of its diffusion in the population that determines its attraction for the new user. Instead of a diffusion channel, the gossip mechanism is rather a learning channel. The driving force of this learning is the value of the good itself which depends on the number of individuals already possessing it, a phenomenon that in economy is known as a positive externalities effect \cite{Katz}. It should also be remarked that our model differs, substantially, from network dynamical model with neighborhood effects (typically cascade dynamics) where the driving force depends on the number of neighbors possessing the new feature. For technological items whose value for a unit depends on fractions of neighbors (friends) already adopting it, these last models are the right one to consider. Instead for items like a new application for smart-phone, a new program for PCs, their value and their reliability strictly depends on their diffusion in the whole population. For such context, our model turns out to be more appropriate.

The presence of the abandoning mechanisms has various motivations. From the applicative point of view, it may model a tendency to abandon technologies which have a maintenance cost or also a sort of limitation to the time period during which a unit can influence their neighbors. Our analysis will however cover also, as a limit case, the situation when this term is not present. Mathematically, the case when only the persuasion mechanism is present is not particularly interesting as in this case adoption will diffuse to the entire population independently on the strength of the mechanism and on the topology of the network. 

Formally, our model is a jump Markov process \cite{levin} on a space whose dimension grows exponentially in the number of individuals. As for the SIS model \cite{draief}, the regression mechanism induces an absorbing state where no agent has the asset and standard probabilistic arguments (such as Borel-Cantelli lemma) allow to conclude that with probability $1$, the system will be, in finite time, in this absorbing configuration. The key point is thus to analyze the maximum level of diffusion of the asset in the population and its persistence in time before the absorbing event. 

In the SIS model, we witness the presence of two different regimes which depend of the strength of the contagion mechanism with respect to the spontaneous regression. If this strength is too small, the epidemic quickly decades to the absorbing state, while if it is above a certain threshold, the epidemic expands and remains persistent in the population, for a time scaling exponentially in the size of the population (see \cite{draief}). In our model, besides these two regimes, we witness, in many cases, the presence of a third intermediate regime where the behavior (either fast extinction or long persistence) strictly depends on the initial condition, namely the original fraction of users of the asset in the population. If the original diffusion of the new item in the population is below a certain threshold, the item will not be able to spread, while if it is above, the persuasion mechanism will be able to push towards a wide and persistent diffusion. In \cite{mtns} we proved the existence of this regime for some easy mean field network topologies (complete graph communities). A different mean field model with the same driving mechanism was considered in \cite{Easley} (see Chapter 17) and analogous dependences on the initial fraction of users was found. In this paper we want to extend our results to very general graphs, relating the presence of this regime to some connectivity features of the graph.

The main results of this paper are Theorem \ref{absorbing}, originally presented in \cite{mtns}, and Theorem \ref{absorbing general}, which generalized the previous to the case of a general interaction graph. In this case, the presence of the various regimes is not only due to the strength of the gossip persuasion mechanism (as in Theorem \ref{absorbing}) but, also, to the topology of the interactions (measured through the degrees, the spectral radius of the adjacency matrix and the bottleneck ratio). Even if results in Theorem \ref{absorbing general} are not exhaustive as in the mean field case, nevertheless, they are sufficient to prove the existence of the intermediate regime where the behavior strongly depends on the initial condition, for a number of relevant families of graphs including random Erd\H{o}s-R\'{e}nyi graphs and random graphs with prescribed (bounded) degree distribution.

In the remaining part of this Section, we present a formal description of the proposed model, as well all notation and concepts used throughout the paper. Section \ref{section mean field} is devoted to the mean field approximation (i.e. the analysis on a complete graph) and Section \ref{section general} to the analysis on general graphs. Finally, Section \ref{section specific} contains the application of results proved in Section \ref{section general} to a number of specific families of graphs and a number of simulation results corroborating the analytical results and actually showing that also in less connected family of graphs, such phase transition phenomena do take place even if our analysis is not sufficient to prove their existence.

\subsection{Description of the model}\label{section description}
Let $G =(V, E)$ be a directed graph with a finite set of nodes $V$, called \emph{agents}, and set of edges $ E\subseteq  V\times  V$. Put $N=|V|$.
The presence of an edge $(v,w)$ has to be interpreted in the sense that agent $v$ is influenced by agent $w$. Let $\mc N_v$ be the set of the (out-)neighbors of $v$, namely
$$
\mc N_v=\left\{ w\in V : (v,w)\in  E\right\}\,,
$$
while  $d_v=|\mc N_v|$ denotes the (out-)degree of $v$.
Notice that $\mc N_v$ is the set of agents who influence $v$. 
Agents are described by their state, in particular $X_v(t)=1$ if agent $v$ has the asset at time $t$, otherwise $X_v(t)=0$. States are assembled in a vector $X(t)$, called the {\it state configuration} of the system at time $t$. $\delta_v$ denotes the configuration where the agent $v$ is in state $1$, while all other agents are in state $0$ and $x\1$ denotes the pure configuration where all nodes have state $x$. Given ${\bf y}\in \{0,1\}^{V}$, we define $z({\bf y})=N^{-1}\sum_i y_i\in [0,1]$ to be the fraction of $1$'s among the components of the vector ${\bf y}$. 

Dynamics is defined as follows: nodes and edges are equipped with independent Poisson clocks. Agents activate at rate $1$ (this choice is just for the sake of simplicity), while edges activate at rate $\beta\bar{d}^{-1}$ where $\bar{d}$ is the average degree of the graph (this rescaling with respect to $\bar d$ is useful in presenting our large scale results). 
When agent $v$ or edge $(v,w)$ activates, agent $v$ will have the possibility to revise its state according to:
\begin{itemize}
\item {\bf Persuasion by gossip}: Fix a function $\phi:[0,1]\to [0,1]$. If the edge $(v,w)$ activates at time $t$,  $X_v(t)=0$, $X_w(t)=1$ and the fraction of $1$'s in the population is $z$, then agent $v$ updates its state to $1$ with probability $\phi(z)$.
\item {\bf Spontaneous regression}: If the agent $v$ activates at time $t$ and  $X_v(t)=1$, its state spontaneously changes to $0$.
\end{itemize}
Formally, $X(t)$ is a jump Markov process on $\{0,1\}^V$ whose non-zero transition rates from $X(t)=\bf y$ are:
\be\label{transitions}\left\{\ba{l}
\lambda_{{\bf y}, {\bf y}+\delta_v}=\beta\bar{d}^{-1}(1-y_v)\sum\limits_{w\in N_v}y_w\phi(z({\bf y}))\\
\lambda_{{\bf y}, {\bf y}-\delta_v}=y_v.
\ea\right.\ee
Notice that when $\phi$ is constant, this model reduces to the classical SIS model \cite{draief}. The main feature and novelty of this model is the fact that, when the function $\phi$ is instead not constant, the gossip dynamics is affected by the global distribution of the state in the population of agents. In this model agents influence each other through two ``information channels'': the one coming from the graph $ G$ and the another one due to the global pressure of the population state. These two channels are coupled through the persuasion mechanism described above.

In this paper we will exclusively consider $\phi$ non-decreasing since we are interested in the effect of positive externalities and concave, since it seems reasonable to consider the persuasion effect to increase lower than linearly as the asset spreads (trivially an increase of $1$ is certainly felt more when starting from $10$ than when starting from $1000$). In \cite{mtns} we called these assumptions \emph{standard assumptions}. For the sake of simplicity, in this paper, in addiction to these assumptions, we will stick to the case when $\phi'(0)<\phi(0)$ (the other case, studied in \cite{mtns}, being a simpler and less interesting case than this). From this moment on we will refer to \emph{standard strong assumptions} (s.s.a.) on $\phi(z)$ to indicate $\phi\in C^2([0,1])$ such that, $\forall z\in[0,1]$,
\be\label{standard}\phi'(z)\geq 0\quad\text{,}\quad\phi''(z)\leq 0\quad\text{and}\quad\phi'(0)<\phi(0)\ee

As for the case of the SIS dynamics, if the parameters of the model are not degenerate, the pure configuration $0\1$ is the only absorbing state of $X(t)$ and from every configuration there is a non zero probability of reaching it in finite time. Consequently, with probability one, $X(t)=0\1$ for $t$ sufficiently large. Our aim is to study the behavior of the system before the absorbing event. Indeed, the maximum diffusion and the persistence of $1$'s in the population will be shown to exhibit a fundamental dependence on the parameter $\beta$, the function $\phi$, the topology of the graph, and, possibly, the initial fraction of $1$'s. 
 
The analysis will be carried on by considering the diffusion of $1$'s in the population whose dynamics is described by the process $Z(t)=z({X(t)})$, taking values in $\mc S_N=\{0,1/N,\dots , 1\}$. For a general graph, the process $Z(t)$ is not Markovian since the distribution of $1$'s in a neighborhood of a node is in general different from the global distribution of $1$'s in the population. 

To study $Z(t)$ on a general graph, we have to introduce the idea of \emph{active edges}. An edge $(v,w)$ is called active at time $t$ if $X_v(t)=0$ and $X_w(t)=1$. If we now denote by $\xi(t)=\xi(X(t))$ the fraction of active edges at time $t$ (namely $\xi(t)$ is the ratio between the number of active edges at time $t$ and $\abs{  E}$), we have that $Z(t)$ is Markovian when conditioned to $\xi(t)$. In particular, when conditioned to $\xi(t)=\xi$, $Z(t)$ is a birth and death Markov jump process and its transition rates from the state $z$ to $z+1/N$ and $z-1/N$ are, respectively, given by
\be\label{model}\left\{\ba{l}\lambda^{+}(z,\xi)=\abs{E}\bar{d}^{-1}\beta\xi \phi(z)=N\beta\xi \phi(z)\\
 \lambda^{-}(z,\xi)=N z.
 \ea\right.\ee
Of course, the difficulty is in the fact that the process $\xi(t)$ is not explicitly known. The next section is devoted to recall the main results in the mean field case from \cite{mtns}, which is essentially the only case when $\xi(t)$ is a deterministic function of $Z(t)$ and, thus, $Z(t)$ itself is Markovian. In spite of the apparent simplicity of this example, we will see that all possible interesting phenomena are already present in it. The general case will be taken up in Section \ref{section general}.

\section{The mean field case}\label{section mean field}
In this section we assume the graph to be complete and we will also assume that self-loops are present so that each unit belongs to its own neighborhood (this assumption is simply for the sake of simplicity of notation and has no effect in the large scale analysis). 
Under this assumptions, the fraction of active edges is a deterministic function of $Z(t)$, in particular we have that $\xi(t)=Z(t)(1-Z(t))$. This immediately implies that $Z(t)$ is a Markov birth and death process and its transition rates from (\ref{model}) read
\be\label{transitions mean field} \left\{\ba{ll}\lambda^{+}(z)=N\beta z(1-z) \phi(z)\\ \lambda^{-}(z)=N z.
\ea\right.\ee
For such processes a quite complete analysis is available and will be developed below. Notice that in \cite{mtns}, we obtined (\ref{transitions mean field}) directly from the analysis of the process $X(t)$ when $G$ is complete and not passing through the fraction of active edges $\xi(t)$. Of course, in this case, the local gossip interaction and the global influence are somehow mixed together but some key interesting phenomena can already be observed here.  

At first we can consider the hydrodynamic limit, sending $N\to \infty$. For finite time ranges $[0, T]$,  Kurtz's theorem \cite{Kurtz} insures that if $Z(0)$ converges a.s. to $z_0$, then the process $Z(t)$ converges uniformly on $[0, T]$  to the solution $z(t)$ of the following Cauchy problem:
\begin{equation}\label{ode}
\begin{cases}
z'(t)=F(z)\\
z(0)=z_0.
\end{cases}
\end{equation}
where 
\begin{equation}\label{rhs}F(z):=\lim\limits_{N\to +\infty}\frac{1}{N}(\lambda^+(z)-\lambda^-(z))=\beta z(1-z)\phi(z)-z.\end{equation}
More precisely, for every $T>0\;\exists\;C_T>0$, such that for $N$ sufficiently large, the following exponential decay holds:
\begin{equation}\label{Kurtz}\P\left(\sup\limits_{0\leq t\leq T}\left| Z(t)-z(t)\right|>\epsilon\right)\leq 4\exp\left(-C_TN\epsilon^2\right).\end{equation}

The analysis of (\ref{ode}) follows from the analysis of the zeros of $F(z)$
\begin{lemma}\label{lemma zeroes} Assume that $\phi$ satisfies s.s.a. (\ref{standard}) and let 
\begin{equation}\label{beta star}\beta^*=\left[\max_{z\in [0,1]} (1-z)\phi(z)\right]^{-1}.\end{equation} Then,
\begin{enumerate}
\item $\beta<\beta^{*}$, then $F$ has one zero $0$ and $F'(0)<0$;
\item $\beta^*<\beta<\phi(0)^{-1}$, then $F$ has three zeros $0<z_u(\beta)<z_s(\beta)$ and $F'(0)<0$, $F'(z_u(\beta))>0$, and $F'(z_s(\beta))<0$.
\item $\beta>\phi(0)^{-1}$, then $F$ has two zeros $0<z_s(\beta)$, $F'(0)>0$ and $F'(z_s(\beta))<0$.
\end{enumerate}
With the understanding that if $\phi(0)=0$, case 2. is unbounded above and case 3. never shows up. Points $z_s(\beta)$ and $z_u(\beta)$ can be characterized as follows
\be\label{equilibria} \begin{array}{rcl}z_u(\beta)&=&\min\{z>0\;|\;
\beta(1-z)\phi(z)-1=0\},\\\ z_s(\beta)&=&\min\{z>z_u(\beta)\;|\; \beta(1-z)\phi(z)-1=0\},\end{array}\ee
for those cases when they exist.
\end{lemma}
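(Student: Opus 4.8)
The plan is to factor out the trivial equilibrium and reduce the whole statement to the geometry of a single concave auxiliary function. Write $F(z) = z\,g(z)$ with $g(z) := \beta(1-z)\phi(z) - 1$, so that $z=0$ is always a zero and the remaining zeros are exactly the solutions in $(0,1]$ of $h(z) = 1/\beta$, where $h(z) := (1-z)\phi(z)$. Everything then follows from three elementary facts about $h$: its boundary values $h(0)=\phi(0)$ and $h(1)=0$; its concavity; and the sign of $h'(0)$.

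First I would establish the shape of $h$. Differentiating, $h'(z) = (1-z)\phi'(z) - \phi(z)$ and $h''(z) = -2\phi'(z) + (1-z)\phi''(z)$. Since $\phi' \ge 0$, $\phi'' \le 0$ and $1-z \ge 0$ on $[0,1]$, the s.s.a. give $h'' \le 0$, so $h$ is concave. The third s.s.a. fixes the sign of $h'(0) = \phi'(0) - \phi(0)$; together with concavity and $h'(1) = -\phi(1) \le 0$ this forces $h$ to be unimodal on $[0,1]$, rising from $h(0)=\phi(0)$ to a unique interior maximum value $\max_{[0,1]} h = 1/\beta^*$ and then decreasing to $h(1)=0$. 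In particular $\beta^* \le \phi(0)^{-1}$, with the strict inequality that makes the intermediate regime nonempty; when $\phi(0)=0$ one has $\phi(0)^{-1}=\infty$, which is exactly the degenerate understanding recorded in the statement.

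With this picture the three cases are just a count of intersections of the horizontal line at height $1/\beta$ with the graph of $h$, driven by where $1/\beta$ sits relative to the two critical heights $\max h = 1/\beta^*$ and $h(0)=\phi(0)$. If $\beta < \beta^*$ then $1/\beta > \max h$ and there is no interior solution; if $\beta^* < \beta < \phi(0)^{-1}$ then $\phi(0) < 1/\beta < \max h$ cuts both branches, yielding exactly two interior roots $z_u < z_s$ on the increasing and decreasing branches respectively; if $\beta > \phi(0)^{-1}$ then $1/\beta < \phi(0)$ lies below $h(0)$, so only the decreasing branch is cut and a single root $z_s$ survives (the increasing-branch root having merged into $z=0$ through a transcritical bifurcation at $\beta = \phi(0)^{-1}$). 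Taking the first and second such roots gives the characterization (\ref{equilibria}).

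Finally I would read off the signs of $F'$. At the trivial root, $F'(0) = g(0) = \beta\phi(0) - 1$, which is negative precisely when $\beta < \phi(0)^{-1}$ and positive when $\beta > \phi(0)^{-1}$, matching cases 1--2 versus case 3. At any nonzero root $z_0$ one has $g(z_0)=0$, hence $F'(z_0) = z_0\,g'(z_0) = z_0\,\beta\,h'(z_0)$, so the sign of $F'(z_0)$ equals that of $h'(z_0)$: positive at $z_u$ (increasing branch) and negative at $z_s$ (decreasing branch). This yields all the derivative sign claims. The only genuinely non-formulaic step is the unimodality of $h$ --- that concavity together with the sign of $h'(0)$ forces a single interior maximum and hence at most two interior crossings --- which I would isolate as a short standalone observation before the case analysis; the remainder is bookkeeping.
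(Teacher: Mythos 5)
Your argument is essentially the paper's own proof: both reduce the problem to the concave auxiliary function $f(z)=\beta(1-z)\phi(z)-1$ (your $\beta h(z)-1$), show it is increasing near $0$ and decreasing near $1$, hence unimodal with interior maximum value $\beta(\beta^*)^{-1}-1$, count the crossings of the zero level in each regime, and read off the signs of $F'$ at the roots from $F'(0)=\beta\phi(0)-1$ and $F'(z_0)=z_0 f'(z_0)$. The one caveat is that your unimodality step, exactly like the paper's assertion that $f'(0)>0$, requires $\phi'(0)>\phi(0)$, which is the reverse of the inequality literally written in the standing assumption (\ref{standard}); this is evidently a sign typo in the paper (the worked example $\phi(z)=z$ satisfies $\phi'(0)>\phi(0)$), and you have silently adopted the same corrected reading, so no genuine gap results.
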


\begin{proof}
Let $f(z)=\beta(1-z)\phi(z)-1$. It is straightforward to check, under (\ref{standard}), that $f(z)$ is a concave function, that $f(1)<0$, that $f'(1)<0$ and that $f'(0)>0$. Hence the function $f(z)$ is increasing in a neighborhood of $0$ and decreasing in a neighborhood of $1$. Hence, $f(z)$ presents a global maximum point $z_{\rm max}\in (0,1)$ and $f(z_{\rm max})=\beta(\beta^*)^{-1}-1$. In case 1., $f(z)$ has no zeros. In case 2., instead, $f(z)=0$ has two distinct solutions $z_u(\beta)<z_s(\beta)$. Finally, in case 3., $f(z)$ has one zero $z_s(\beta)>0$. Sign of derivatives in $0$ can be checked directly, while other signs follow from the monotonicity properties of $f$.
\end{proof}
{\remark
In the special case $\phi(z)=z$, which will be used in the simulations of Section \ref{section specific}, we have that for $\beta\geq\beta^*=4$, explicit computation show that:
\begin{equation}\label{mean}
z_u(\beta)=\frac{1}{2}-\frac{1}{2}\sqrt{1-\frac{4}{\beta}},\quad\quad
z_s(\beta)=\frac{1}{2}+\frac{1}{2}\sqrt{1-\frac{4}{\beta}}.
\end{equation}}

Now the asymptotic behavior of (\ref{ode}) is an immediate consequence of previous result:

\begin{proposition}\label{mean_phi}
Assume $\phi$ satisfies s.s.a. (\ref{standard}). Then
\begin{enumerate}
\item if  $\beta<\beta^{*}$, then $z(t)\rightarrow 0$ $\forall z_0$;
\item if $\beta^*<\beta<\phi(0)^{-1}$, then $z(t)\rightarrow 0$ $\forall z_0< z_u(\beta)$ and $z(t)\rightarrow z_s(\beta)$ $\forall z_0> z_u(\beta)$;
\item if $\beta>\phi(0)^{-1}$, then $z(t)\rightarrow z_s(\beta)$ $\forall z_0\neq 0$.
\end{enumerate}
With the understanding that if $\phi(0)=0$, case 2. is unbounded above and case 3. never shows up.
\end{proposition}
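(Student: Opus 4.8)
The plan is to reduce the statement to the elementary phase-line analysis of the scalar autonomous ODE \eqref{ode}, feeding in the sign and stability information already extracted in Lemma \ref{lemma zeroes}. The starting point is the factorization
\[
F(z)=\beta z(1-z)\phi(z)-z=z\big(\beta(1-z)\phi(z)-1\big)=z\,f(z),
\]
so that the equilibria of \eqref{ode} are exactly $z=0$ together with the zeros of $f$, and the sign of $F$ on $(0,1]$ coincides with the sign of $f$. Lemma \ref{lemma zeroes} then hands us, in each of the three regimes, the complete list of equilibria in $[0,1]$ together with the sign of $F'$ at each of them, which is all the data the phase-line argument needs.

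Next I would establish that $[0,1]$ is forward-invariant, which is what makes the one-dimensional argument apply. Since $F(0)=0$, the point $z=0$ is an equilibrium, and because $F\in C^{1}$ (as $\phi\in C^{2}$) solutions are unique, so no trajectory issued from $z_{0}>0$ can reach $0$ in finite time; moreover $F(1)=-1<0$, so the vector field points strictly inward at the right endpoint. Hence every solution with $z_{0}\in[0,1]$ remains in $[0,1]$ for all $t\ge 0$. With invariance secured, the core of the argument is the standard fact that scalar autonomous ODEs have monotone solutions: on any open interval between two consecutive equilibria the sign of $F$ is constant, so $z(t)$ is strictly monotone and bounded there, hence convergent, and its limit must be a zero of $F$, namely the equilibrium at the end of the interval toward which the flow points.

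It then remains to read off the three cases. In case~1 ($\beta<\beta^{*}$), Lemma \ref{lemma zeroes} gives $f<0$, hence $F<0$ on all of $(0,1]$, so every solution with $z_{0}>0$ decreases monotonically to the unique equilibrium $0$. In case~2 ($\beta^{*}<\beta<\phi(0)^{-1}$), the signs forced by $F'(0)<0$, $F'(z_{u}(\beta))>0$ and $F'(z_{s}(\beta))<0$ yield $F<0$ on $(0,z_{u}(\beta))$, $F>0$ on $(z_{u}(\beta),z_{s}(\beta))$ and $F<0$ on $(z_{s}(\beta),1]$; thus solutions with $z_{0}<z_{u}(\beta)$ decrease to $0$, while those with $z_{0}>z_{u}(\beta)$—whether between $z_{u}(\beta)$ and $z_{s}(\beta)$ or above $z_{s}(\beta)$—converge monotonically to $z_{s}(\beta)$. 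In case~3 ($\beta>\phi(0)^{-1}$), the signs $F'(0)>0$ and $F'(z_{s}(\beta))<0$ give $F>0$ on $(0,z_{s}(\beta))$ and $F<0$ on $(z_{s}(\beta),1]$, so every solution with $z_{0}\neq 0$ converges to $z_{s}(\beta)$.

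Since Lemma \ref{lemma zeroes} already performs the analytic work of locating the zeros of $f$ and signing the derivatives, no step here is genuinely hard; the argument is essentially a monotone-convergence bookkeeping over the intervals cut out by the equilibria. The only point deserving a little care is the appeal to uniqueness of solutions, which is precisely what guarantees that the unstable equilibrium $z_{u}(\beta)$ in case~2 acts as an impassable separatrix, so that the basin of $0$ and the basin of $z_{s}(\beta)$ are cleanly separated at $z_{u}(\beta)$ as claimed.
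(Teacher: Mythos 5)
Your proof is correct and is precisely the phase-line argument the paper has in mind: the paper gives no explicit proof, stating the proposition as an ``immediate consequence'' of Lemma \ref{lemma zeroes}, and your write-up simply supplies the standard details (factorization $F(z)=zf(z)$, forward invariance of $[0,1]$, and monotone convergence on the intervals between consecutive equilibria) that the authors leave implicit.
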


Differently from the SIS dynamics, we witness here a regime $\beta^*<\beta<\phi(0)^{-1}$ where the asymptotics depends on the initial condition. Considering that Kurtz's theorem only guarantees convergence to the solution of the ODE for bounded range times, it is not a-priori clear what type of information on the original process $Z(t)$ can be gained when $N$ is large but finite. A key question is if the bifurcation phenomena described in Proposition \ref{mean_phi} (in particular the dependence on the initial condition) admit a sound interpretation at the level of the transient behavior of the process $Z(t)$. The answer is on the affirmative and relies on the analysis of the sojourn and absorbing times of the process $Z(t)$ expressed in the following result which essentially shows two facts.  When in the ODE there is a phenomenon of convergence to a stable non zero equilibrium, then the process $Z(t)$ with overwhelming probability\footnote{With overwhelming probability means that the probability of the event converges to $1$ exponentially fast as $N\to\infty$.} gets close to such point in finite time (due to \ref{Kurtz}) and then remains close to it for a time which is exponentially long in $N$ (as we will prove below). When instead in the ODE we have convergence to $0$, then, with overwhelming probability, in finite time the process gets close to $0$ and remains there ever since.

\begin{theorem}\label{absorbing}
Assume $\phi$ satisfies s.s.a. (\ref{standard}). Hence, for every $\eps >0$ we can find $C_\eps>$ and  $T_{\eps}>0$ for which the following holds true, if $N$ is sufficiently large,
\begin{enumerate}
\item if $\beta<\beta^{*}$, then $\forall z$, $$\P_z\left(\sup\limits_{t\geq T_\eps}Z(t)>\eps \right)<e^{-C_\eps N};$$
\item if $\beta^*<\beta<\phi(0)^{-1}$, then $\forall z<z_u(\beta)-\eps$,
$$\P_z\left(\sup\limits_{t\geq T_\eps}Z(t)>\eps \right)<e^{-C_\eps N},$$
and $\forall z>z_u(\beta)+\eps$,
$$\P_z\left(\inf\limits_{t\in [T_\eps, T_\eps +e^{C_\eps N}]}Z(t)<z_s(\beta)-\eps \right)<e^{-C_\eps N};$$
\item if $\beta>\phi(0)^{-1}$, then $\forall z>\eps$,
$$\P_z\left(\inf\limits_{t\in [T_\eps, T_\eps +e^{C_\eps N}]}Z(t)<z_s(\beta)-\eps \right)<e^{-C_\eps N}.$$
\end{enumerate}
\end{theorem}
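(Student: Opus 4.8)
The plan is to combine the finite-horizon fluid approximation (\ref{Kurtz}) with classical estimates for the one-dimensional birth-and-death process (\ref{transitions mean field}), the latter encoded in a scalar potential. Writing the ratio of down- to up-rates as $\psi(z):=\lambda^-(z)/\lambda^+(z)=[\beta(1-z)\phi(z)]^{-1}$, I would introduce $V(z):=-\int_0^z\log\big(\beta(1-s)\phi(s)\big)\,\de s$, so that the products $\prod_{j=1}^{\lfloor Nz\rfloor}\psi(j/N)$ that govern every hitting probability and mean hitting time of the chain behave, by a Riemann-sum (Laplace) estimate, like $e^{NV(z)(1+o(1))}$. Since $V'(z)=-\log(\beta(1-z)\phi(z))=-\log(1+f(z))$ with $f$ as in the proof of Lemma \ref{lemma zeroes}, the drift is read off $V$: it is strictly decreasing exactly where $f>0$, so in case 2 it has a barrier (local max) at $z_u(\beta)$ and a well (local min) at $z_s(\beta)$; in case 3 a single well at $z_s(\beta)$; and in case 1 it is strictly increasing on all of $(0,1]$. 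Throughout I may assume $\eps$ small, since the events in question are monotone in $\eps$.

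The entrance step is common to all regimes. By Proposition \ref{mean_phi} the ODE solution from any admissible $z_0$ reaches an $\eps/4$-neighborhood of its limit (either $0$ or $z_s(\beta)$) in finite time, and since $F$ is bounded away from $0$ on the relevant compact sub-intervals, this time is uniform in $z_0$; call it $T_\eps$. The Kurtz estimate (\ref{Kurtz}) on $[0,T_\eps]$ then shows that, off an event of probability at most $4e^{-C_{T_\eps}N(\eps/4)^2}$, one has $Z(T_\eps)$ within $\eps/2$ of the target. The two remaining steps are conditioned on $Z(T_\eps)$ via the strong Markov property.

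The confinement near $0$ (case 1, and case 2 for $z<z_u(\beta)-\eps$) is the clean step, exploiting that $0$ is absorbing: since $\lambda^+(0)=\lambda^-(0)=0$, on $\{Z(T_\eps)\le\eps/2\}$ the event $\{\sup_{t\ge T_\eps}Z(t)>\eps\}$ coincides with the event that the chain hits level $\eps$ before being absorbed at $0$, a single hitting probability requiring no union bound over time. Taking $\eps$ small enough that $V$ is strictly increasing on $[0,\eps]$ (automatic in case 1, and valid in case 2 because $\eps<z_u(\beta)$), the standard birth-and-death formula writes this probability as a ratio of partial sums of the $\prod\psi$, which the potential estimate bounds by $e^{-cN}$ with $c=c(\eps)>0$ proportional to the barrier $V(\eps)-V(Z(T_\eps))$. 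Choosing $C_\eps$ below both $c$ and $C_{T_\eps}(\eps/4)^2$ settles these items.

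The main obstacle is the persistence near $z_s(\beta)$ (case 3, and case 2 for $z>z_u(\beta)+\eps$), where one must outlast a horizon that is itself exponentially long. Restricting to $\eps<(z_s(\beta)-z_u(\beta))/2$ so that the drift points toward $z_s(\beta)$ throughout $[z_s(\beta)-\eps,z_s(\beta)]$, I would estimate the mean time $\tau$ for the chain to descend from within $\eps/2$ of $z_s(\beta)$ to level $z_s(\beta)-\eps$. The explicit birth-and-death formula for mean hitting times, a double sum of the products $\prod\psi$, together with the Laplace estimate, yields $\E[\tau]\ge e^{cN}$ with $c=c(\eps)>0$ governed by the strictly positive barrier $V(z_s(\beta)-\eps)-V(z_s(\beta))$. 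Markov's inequality then gives $\P\big(\tau\le e^{C_\eps N}\big)\le e^{(C_\eps-c)N}$, and taking $C_\eps\le c/2$ makes this at most $e^{-C_\eps N}$ while the guaranteed persistence window is $e^{C_\eps N}$, exactly as stated. The delicate points I anticipate here are making the Laplace asymptotics of the discrete products uniform enough to extract a clean exponential rate (in particular controlling $V$ near $z=1$, where $\log(1-s)$ is singular but integrable), and reconciling the single constant $C_\eps$ that must simultaneously bound the escape probability and the length of the persistence window; both are resolved by choosing $C_\eps$ small relative to the geometric barriers above.
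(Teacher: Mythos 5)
Your entrance step (Kurtz on $[0,T_\eps]$ to land within $\eps/2$ of the ODE limit, then the strong Markov property) and your confinement step near $0$ (reduce to a single hitting probability of level $\eps$ before absorption at $0$, bounded by the gambler's-ruin ratio of products $\prod\psi$) are exactly the paper's argument; the latter is its Lemma \ref{lemma time2}, merely rewritten through your potential $V$. The genuine problem is the persistence step. Markov's inequality states $\P(\tau\ge a)\le \E[\tau]/a$; it bounds the probability that a nonnegative variable is \emph{large}, and a lower bound $\E[\tau]\ge e^{cN}$ gives no control whatsoever on $\P(\tau\le e^{C_\eps N})$. (Take $\tau=0$ with probability $1/2$ and $\tau=2e^{cN}$ otherwise: the mean is $e^{cN}$ but the chain escapes immediately half the time.) So the sentence ``Markov's inequality then gives $\P(\tau\le e^{C_\eps N})\le e^{(C_\eps-c)N}$'' is not a valid deduction, and with it the whole persistence claim --- which is the heart of items 2 and 3 --- collapses. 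A first-moment bound on the exit time is simply the wrong tool here; you need a high-probability statement about the exit \emph{event}.

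The paper's Lemma \ref{lemma time1} supplies exactly that: it conditions on the number of jumps $\Lambda(T)$ in the window (Poisson of intensity $\mu$, so at most $3\mu T$ with probability $1-(e/3)^{3\mu T}$), observes that any descent from above $z_0$ to below $z_0-\eps$ forces a run of $l\ge\eps N$ consecutive jumps inside $(z_0-\eps,z_0+\eps)$ whose left-minus-right excess exceeds $\eps N$ while each jump goes right with probability at least $p=\frac{1+\delta}{2+\delta}>\frac12$, and then union-bounds over the starting index and length of that run using Chernoff. This yields $\P(\exists t\le \mu^{-1}e^{C\eps N}:Z(t)<z_0-\eps)\le 10e^{-C\eps N}$ directly. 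If you want to salvage your potential-function framework rather than adopt that argument, the standard repair is to bound the probability that a \emph{single} excursion started at $z_s(\beta)-\eps/2$ hits $z_s(\beta)-\eps$ before returning above $z_s(\beta)-\eps/2$ --- this \emph{is} a gambler's-ruin hitting probability and is $e^{-c'N}$ by your Laplace estimate --- and then union-bound over the at most $O(\mu e^{C_\eps N})$ excursions that can occur in the window, choosing $C_\eps<c'$. Either way, some union bound over time (or over jumps/excursions) is unavoidable; the mean-hitting-time computation cannot replace it.
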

The proof of Proposition \ref{absorbing} is based on a couple of technical lemmas. We start with this one dealing with the time needed by a death and birth process to proceed 'against' the mean drift.

\begin{lemma}\label{lemma time1}
Let $Z(t)$ be a  birth and death process on the state-space $\mc S_N$ with transitions rates, respectively, $\lambda^+(z)$ and $\lambda^-(z)$. Let  
$\mu:=\max_z[\lambda^+(z)+\lambda^-(z)]$.
Assume there exists an interval $(z_0-\eps, z_0+\eps)\subseteq (0,1)$, where $z_0\in (0,1)$ and $\eps >0$, such that
\begin{equation}\label{condition}
\lambda^{+}(z)\geq (1+\delta)\lambda^{-}(z),\quad\forall z\in  S_N\cap (z_0-\eps, z_0+\eps)
\end{equation}
for some $\delta >0$.
Then, for any $z>z_0$,
\be\label{obj lt1}\P_z\left(\exists t\in [0, \mu^{-1}e^{C\eps N}]\,|\, Z(t)<z_0-\eps \right)<10e^{-C\eps N}\ee
for a suitable constant $C>0$ only depending on $\delta$.
\end{lemma}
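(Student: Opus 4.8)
The plan is to pass to the embedded discrete-time jump chain and reduce the statement to a gambler's ruin estimate against a positive drift, then to control the total number of jumps occurring in the allotted time window.

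First I would introduce the jump chain $(\tilde Z_n)$ associated with $Z(t)$: from a state $z$ it moves to $z+1/N$ with probability $p(z)=\lambda^+(z)/(\lambda^+(z)+\lambda^-(z))$ and to $z-1/N$ with probability $q(z)=1-p(z)$. Hypothesis (\ref{condition}) then translates directly into a uniform upward bias on the strip: for every grid point $z\in(z_0-\eps,z_0+\eps)$, the inequality $\lambda^+(z)\geq(1+\delta)\lambda^-(z)$ gives $p(z)\geq\frac{1+\delta}{2+\delta}=:p$, and hence $q(z)/p(z)\leq\rho:=\frac{1}{1+\delta}<1$. The crucial point is that $\rho$, and thus the per-step bias, depends only on $\delta$; this is the origin of the $\delta$-dependence of the final constant $C$.

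Second, the core of the argument is a one-excursion gambler's ruin bound. Measuring positions in units of $1/N$ above the bottom level $z_0-\eps$, I would use the function $\psi(k)=\rho^{k}$, which is a supermartingale for the jump chain at every interior grid point of $(z_0-\eps,z_0)$: there the true up-probability is at least $p$ and $\psi$ is decreasing, so raising the up-probability can only lower $\E[\psi]$, which equals $\psi$ at the reference bias $p$. Applying optional stopping to the exit of the sub-strip $[z_0-\eps,z_0]$, an excursion started from $z_0$ reaches $z_0-\eps$ before returning to $z_0$ with probability at most $\rho^{\,\eps N-1}=(1+\delta)\,e^{-\eps N\ln(1+\delta)}$. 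Since the jumps have size $1/N$, any trajectory started at $z>z_0$ must visit $z_0$ before it can drop below $z_0-\eps$, so every descent to the forbidden level is preceded by such an excursion, and by the strong Markov property each excursion reaches $z_0-\eps$ with probability at most $\rho^{\,\eps N-1}$.

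Third, I would bound the number of attempts. The total jump rate never exceeds $\mu$, so the number of jumps $M(T)$ in $[0,T]$ with $T=\mu^{-1}e^{C\eps N}$ is stochastically dominated by a Poisson variable of mean $\mu T=e^{C\eps N}$; a Chernoff bound gives $\P(M(T)>2e^{C\eps N})\leq e^{-c\,e^{C\eps N}}$, which is doubly exponentially small. On the complementary event there are at most $2e^{C\eps N}$ visits to $z_0$, so a union bound over excursions bounds the probability of ever reaching $z_0-\eps$ by $2e^{C\eps N}\cdot\rho^{\,\eps N-1}=2(1+\delta)\,e^{(C-\ln(1+\delta))\eps N}$. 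Choosing $C=\tfrac{1}{2}\ln(1+\delta)$, which depends only on $\delta$, makes this $2(1+\delta)\,e^{-C\eps N}$; adding the negligible doubly-exponential term yields the claimed bound $10\,e^{-C\eps N}$ for $N$ large. The main obstacle is precisely the balancing of the two competing exponential scales, namely the $\approx e^{C\eps N}$ jumps available in the window against the $\approx e^{-\eps N\ln(1+\delta)}$ success probability of each downward excursion; the estimate survives only because one may take $C$ strictly below $\ln(1+\delta)$, and obtaining a per-excursion ruin probability that is uniform in the state-dependent rates, via the fixed-bias supermartingale $\rho^{k}$, is what keeps the argument clean.
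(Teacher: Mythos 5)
Your proof is correct, and it shares the paper's overall skeleton: both arguments pass to the embedded jump chain, extract the uniform up-probability $p=\frac{1+\delta}{2+\delta}>\frac12$ from (\ref{condition}), dominate the number of jumps in $[0,\mu^{-1}e^{C\eps N}]$ by a Poisson variable of mean $e^{C\eps N}$, and then beat the exponentially many attempts with an exponentially small per-attempt probability by taking $C$ a suitable fraction of the Chernoff/ruin exponent. Where you genuinely diverge is in the core per-attempt estimate. The paper takes a union bound over all windows of $l\geq \eps N$ consecutive jumps spent in the strip and applies a Chernoff bound to the binomial count of up-moves in each window, yielding the $s^2e^{-C\eps N}$ term; you instead decompose the trajectory into downward excursions from level $z_0$ and bound each one by optional stopping of the supermartingale $\rho^{k}$ with $\rho=\frac{1}{1+\delta}$. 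Your route buys a cleaner and essentially sharp per-excursion constant $\rho^{\eps N}$, and it absorbs the state-dependence of the rates directly through the supermartingale inequality $p(z)\rho+(1-p(z))\rho^{-1}\leq 1$ for $p(z)\geq p$, rather than through stochastic domination by an i.i.d.\ Bernoulli sequence; the paper's window argument avoids the excursion bookkeeping and the appeal to the strong Markov property, at the cost of a double sum over window positions and lengths. Two cosmetic points you should tidy: $z_0$ need not be a grid point, so the excursions should be anchored at the largest grid point below $z_0$ (costing only an $O(1)$ change in the exponent, as in the paper's $l\geq\lceil\eps N\rceil$); and your final constant $2(1+\delta)$ can exceed $10$ for large $\delta$, so either take $C$ slightly below $\frac12\ln(1+\delta)$ or invoke $N$ large to absorb it --- exactly the same constant-juggling the paper performs in its last line with $(e/3)^x<x^{-2}$.
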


\begin{proof}
Let $\Lambda(t)$ be the number of jumps in the process $Z(t)$ in the time interval $[0,t]$. Clearly, $\Lambda(t)$ is dominated by a Poisson process of energy $\mu$. 
Let $\tilde Z(k)$ be the state of the process immediately after the $k$-th jump. Let $$A(t):=\{k=1,\dots ,\Lambda(t)\;|\;\tilde Z(k-1)\in (z_0-\eps, z_0+\eps)\}.$$
Let $\xi_k$ be the Bernoulli r.v. which is $1$ when the $k$-th jump in $Z(t)$ corresponds to a jump to the right.  Clearly, if $\tilde Z(k)\in (z_0-\eps, z_0+\eps)$, then,
\be\label{sojour-ber}\P(\xi_k=1)= \frac{\lambda^{+}(\tilde Z(k-1))}{\lambda^+(\tilde Z(k-1))+\lambda^-(\tilde Z(k-1))}\geq p=\frac{1+\delta}{2+\delta}>\frac{1}{2}.\ee
Fix a time range $T$ and notice that
$$\P(\Lambda(T)\geq 3\mu T)\leq\sum\limits_{k=\lceil 3\mu T\rceil}^{+\infty}e^{-\mu T}\frac{(\mu T)^k}{k!}\leq \frac{(\mu T)^{\lceil 3\mu T\rceil}}{\lceil 3\mu  T\rceil!}\leq \left(\frac{e}{3}\right)^{\lceil 3\mu T\rceil}_.$$
Conditioning on $\Lambda(T)$, we now estimate 
\be\label{sojour-estim1}\ba{l}\P_z\left(\inf\limits_{t\in [0, T]}Z(t)<z_0-\eps \right)=\\=\displaystyle\sum_s \P_z\left(\inf\limits_{t\in [0, T]}Z(t)<z_0-\eps \;|\; \Lambda(T)=s\right)\P(\Lambda(T)=s)=\\
\leq\displaystyle\sum_{s\leq 3\mu T}\P_z\left(\inf\limits_{t\in [0, T]}Z(t)<z_0-\eps \;|\; \Lambda(T)=s\right)\P(\Lambda(T)=s)+\left(\frac{e}{3}\right)^{\lceil 3\mu T\rceil}_.\ea\ee
As $\mu T$ will be exponential in $N$, this last term will play no significant role in our future estimations. We now concentrate on the summation term of the right hand side of (\ref{sojour-estim1}). The estimation we carry on is simply based on the fact that for the process $Z(t)$ to go below $z_0-\eps$ starting from above $z_0$, there must exist a sequence of  $l\geq\eps N$ consecutive jumps while the process is in $(z_0-\eps, z_0+\eps)$ for which the number of left transitions minus the number of right transitions is above $\eps N$. Henceforth, assuming that $z>z_0$ and using (\ref{sojour-ber}), we have that
\begin{equation}\label{sojour-estim2}\begin{array}{l}\P_z\left(\inf\limits_{t\in [0, T]}Z(t)<z_0-\eps \;|\; \Lambda(T)=s\right)\leq\\\leq
\displaystyle\sum\limits_{k=1}^s\sum\limits_{l=\lceil\eps N\rceil}^s\P\left(k+i\in A(t)\;\forall i=0,\dots ,l-1\,,\;\sum_{i=0}^{l-1}\xi_{k+i}\leq \frac{l}{2}-\delta \frac{N}{2}\right)\leq\\
\leq\displaystyle\sum\limits_{k=1}^s\sum\limits_{l=\lceil\eps N\rceil}^s\sum\limits_{h=0}^{l/2-\eps N/2}{l\choose h} p^h(1-p)^{l-h}
\leq \displaystyle\sum\limits_{k=1}^s\sum\limits_{l=\lceil\eps N\rceil}^se^{-Cl}
\leq s^2e^{-CN\eps},\end{array}\end{equation}
where the third inequality follows from the classical Chernoff bound. The constant $C$ only depends on $p$ and thus, ultimately, on $\delta$.
From (\ref{sojour-estim1}) and (\ref{sojour-estim2}), we obtain
$$\P_z\left(\inf\limits_{t\in [0, T]}Z(t)<z_0-\eps \right)\leq (3\mu T)^2e^{-CN\eps}+\left(\frac{e}{3}\right)^{\lceil 3\mu T\rceil}.$$
Finally, choosing $T=\mu^{-1}e^{\frac{C}{3}\eps N}$ and using the fact that $(e/3)^x<x^{-2}$ for all $x>0$, we immediately obtain the result.

\end{proof}

\begin{lemma}\label{lemma time2}
Let $Z(t)$ be a birth and death process on the state-space $ S_N$ with transitions rates, respectively,  $\lambda ^+(z)$ and $\lambda ^-(z)$. 
Assume that $\lambda ^+(0)=\lambda ^-(0)=0$ and that there exists $\eps >0$ such that
\begin{equation}\label{condition2}
\lambda ^{-}(z)\geq (1+\delta)\lambda ^{+}(z),\quad\forall z\in  S_N\cap (0,2\eps],
\end{equation}
for some $\delta >0$. Then, called $C=\ln (1+\delta)$, for any $z<\eps $,
\be\P_z\left(\exists t\geq 0\,|\, Z(t)> 2\eps \right)<\eps Ne^{-C\eps N}.\ee
\end{lemma}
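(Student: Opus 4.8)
The plan is to show that, starting below $\eps$, the process $Z(t)$ is essentially a subcritical random walk that is repeatedly pushed back toward $0$ by the death-dominated drift on $(0,2\eps]$, so that the probability of ever climbing past the barrier $2\eps$ is exponentially small. The natural tool is a supermartingale (exponential/Lyapunov) argument combined with a union bound over the (finitely many) lattice states that must be crossed. First I would exploit the fact that $\lambda^-(0)=\lambda^+(0)=0$: the state $0$ is absorbing, so the process cannot create new active mass once it reaches $0$, and on the whole interval $(0,2\eps]$ the drift satisfies $\lambda^-\geq(1+\delta)\lambda^+$. Thus, conditioned on a jump occurring while $Z(t)\in(0,2\eps]$, the probability of a \emph{downward} jump is at least $\frac{1+\delta}{2+\delta}>\tfrac12$, exactly the Bernoulli bound already used in Lemma \ref{lemma time1} but with the inequality reversed.

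The cleanest route is to build a multiplicative (exponential) supermartingale tailored to the biased walk. Consider the process $M_k=(1+\delta)^{-N\tilde Z(k)}$ evaluated at the embedded jump chain $\tilde Z(k)$, restricted to times when the chain lies in $(0,2\eps]$. A single jump changes $N\tilde Z$ by $\pm1$, and using \eqref{condition2} together with the down-probability bound one checks that $\E[M_{k+1}\mid \mathcal F_k]\leq M_k$ on this region, because the geometric weight $(1+\delta)^{\pm1}$ is precisely calibrated against the transition probabilities $p$ and $1-p$: the expected multiplicative increment is $p\,(1+\delta)^{1}+(1-p)(1+\delta)^{-1}$, which one verifies equals $1$ when $p=\frac{1+\delta}{2+\delta}$ and is $\le 1$ for larger down-probability. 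Hence $M_k$ is a bounded supermartingale up to the first exit time from $(0,2\eps]$. Applying the optional stopping theorem at the first time $\tau$ the chain reaches level $2\eps$ (i.e.\ $N\tilde Z=\lceil 2\eps N\rceil$), and comparing the value $(1+\delta)^{-2\eps N}$ there against the starting value $(1+\delta)^{-Nz}\le(1+\delta)^{-0}=1$ (for $z<\eps$, actually $(1+\delta)^{-Nz}$), yields a Doob-type bound $\P_z(\text{hit }2\eps \text{ before } 0)\le (1+\delta)^{-(2\eps-z)N}$. With $C=\ln(1+\delta)$ this is at most $e^{-C\eps N}$ per ``excursion'' from a given starting level.

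The remaining step is to upgrade this ``before absorption at $0$'' statement into the claimed uniform-in-time bound $\P_z(\exists t\ge0: Z(t)>2\eps)<\eps N e^{-C\eps N}$. Because $0$ is absorbing and the walk is confined to the lattice $\mathcal S_N\cap[0,2\eps)$ with at most $\eps N$ interior states before the barrier, each time the process fails to escape it either gets absorbed at $0$ or remains below $2\eps$; the event of \emph{ever} exceeding $2\eps$ decomposes over the finitely many lattice levels from which an upward crossing could originate. A union bound over these at most $\eps N$ starting levels, each contributing at most $e^{-C\eps N}$, produces the prefactor $\eps N$ and closes the argument. I expect the main obstacle to be the bookkeeping in step three: one must argue carefully that absorption at $0$ genuinely removes the process from play (so that excursions do not accumulate unboundedly in time) and that the ``hit $2\eps$ before $0$'' supermartingale estimate can be summed cleanly over the lattice without the time horizon re-entering. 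The supermartingale computation itself in the second paragraph is routine once the correct geometric weight $(1+\delta)^{-N\tilde Z}$ is guessed, and it mirrors the Chernoff-type reasoning of Lemma \ref{lemma time1} with the drift direction reversed.
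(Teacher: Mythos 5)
Your overall strategy --- bound the probability that the embedded jump chain hits level $\lceil 2\eps N\rceil$ before the absorbing state $0$ --- is the right one, and it is essentially the same computation the paper performs, packaged differently: the paper writes the first-step (harmonic) recursion for $e_k=\P_{k/N}\left(\exists t\geq 0\,|\,Z(t)\geq \lceil 2\eps N\rceil /N\right)$, telescopes it into the explicit gambler's-ruin product formula using $e_0=0$, and bounds the resulting ratio of products via $\lambda^-/\lambda^+\geq 1+\delta$; your exponential supermartingale is the Lyapunov-function version of that same gambler's-ruin estimate. However, the supermartingale you write down is inverted, and the calibration identity you invoke is false. With $p$ denoting the down-probability (so $p\geq \frac{1+\delta}{2+\delta}$) and $M_k=(1+\delta)^{-N\tilde Z(k)}$, a down jump multiplies $M$ by $(1+\delta)$ and an up jump by $(1+\delta)^{-1}$, so the expected multiplicative increment is $p(1+\delta)+(1-p)(1+\delta)^{-1}$, which is \emph{increasing} in $p$ and equals $\frac{(1+\delta)^3+1}{(2+\delta)(1+\delta)}>1$ at $p=\frac{1+\delta}{2+\delta}$: your $M_k$ is a submartingale, and optional stopping applied to it yields nothing. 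The correct weight is $M_k=(1+\delta)^{+N\tilde Z(k)}$, for which the expected multiplier is $(1-p)(1+\delta)+p(1+\delta)^{-1}$; this equals $1$ exactly at $p=\frac{1+\delta}{2+\delta}$ and is $\leq 1$ for larger down-probability, i.e. precisely under (\ref{condition2}). With this fix, optional stopping at the exit time of $(0,2\eps)$ gives $\P_z(\text{hit } 2\eps \text{ before } 0)\leq (1+\delta)^{Nz-2\eps N}\leq e^{-C\eps N}$ for $z<\eps$, which is the desired bound (in fact slightly stronger than the stated one).

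Your third step is then unnecessary and, as written, not really meaningful. Since $0$ is absorbing and the embedded chain exits the finite region $(0,2\eps)$ almost surely, the event $\{\exists t\geq 0 : Z(t)>2\eps\}$ \emph{is} the event of hitting the upper barrier before $0$, so a single optional-stopping bound from the actual starting point already closes the proof; there is no accumulation of excursions to control and no family of starting levels to union over (every upward crossing of $2\eps$ necessarily occurs from the single level $\lceil 2\eps N\rceil-1$). The prefactor $\eps N$ in the paper's statement does not come from a union bound: it arises because the paper bounds the sum $\sum_{k=0}^{\lfloor\eps N\rfloor-1}\prod_{j=1}^{k}\lambda^-(j/N)/\lambda^+(j/N)$ crudely by $\lfloor\eps N\rfloor$ times its largest term. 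Once the sign of the exponent in your Lyapunov function is corrected, your route is sound and arguably cleaner than the paper's, since it avoids the explicit solution of the recursion.
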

\begin{proof}
Put, for $k=0,\dots , \lceil 2\delta N\rceil$, $e_k=\P_{k/N}\left(\exists t\geq 0\,|\, Z(t)\geq\lceil2\eps N\rceil/N  \right)$. A straightforward argument based on conditioning on the transition at the first jump gives
\be e_k=\frac{\lambda ^+(\frac{k}{N})}{\lambda ^+(\frac{k}{N})+\lambda ^-(\frac{k}{N})}e_{k+1}+\frac{\lambda ^-(\frac{k}{N})}{\lambda ^+(\frac{k}{N})+\lambda ^-(\frac{k}{N})}e_{k-1},\ee
which can be rewritten in a more compact way as
\be (e_{k+1}-e_k)=\frac{\lambda ^-(\frac{k}{N})}{\lambda ^+(\frac{k}{N})}(e_k-e_{k-1}).\ee
Using the boundary condition $e_0=0$, we immediately obtain that
\be\label{eq:recursive}(e_{k+1}-e_k)=\prod\limits_{j=1}^k\frac{\lambda ^-(\frac{j}{N})}{\lambda ^+(\frac{j}{N})}e_1.\ee
From (\ref{eq:recursive}) and the boundary condition $e_{\lceil\eps N\rceil}=1$ we get
\be\label{eq:recursive2}e_1=\left(\sum\limits_{k=0}^{\lceil 2\eps N\rceil -1}\prod\limits_{j=1}^k\frac{\lambda ^-(\frac{j}{N})}{\lambda ^+(\frac{j}{N})}\right)^{-1}\leq \left(\prod\limits_{j=1}^{\lceil 2\eps N\rceil -1}\frac{\lambda ^-(\frac{j}{N})}{\lambda ^+(\frac{j}{N})}\right)^{-1}_.\ee
Finally, from (\ref{eq:recursive}) and (\ref{eq:recursive2}) we obtain
$$e_{\lfloor \eps N\rfloor}=\sum\limits_{k=0}^{\lfloor \eps N\rfloor -1}\prod\limits_{j=1}^k\frac{\lambda ^-(\frac{j}{N})}{\lambda ^+(\frac{j}{N})}e_1
\leq \lfloor \eps N\rfloor\frac{\prod\limits_{j=1}^{\lfloor \eps N\rfloor-1}\frac{\lambda ^-(\frac{j}{N})}{\lambda ^+(\frac{j}{N})}}
{\prod\limits_{i=1}^{\lceil 2\eps N\rceil -1}\frac{\lambda ^-(\frac{i}{N})}{\lambda ^+(\frac{i}{N})}}\leq  \lfloor \eps N\rfloor (1+\delta)^{-\eps N}.$$
which yields the thesis.
\end{proof}
\begin{proof} {\it (of Theorem \ref{absorbing})}
As a general remark notice that the sign of the right-hand side of (\ref{ode}) is positive (negative) if and only if $\lambda^+(z)/\lambda^-(z)$ is, respectively, above (below) $1$.

Item 1 follows from the corresponding Item 1 of Lemma \ref{mean_phi} that, for any $\epsilon >0$, there exists $T_\eps>0$ such that $z(T_\eps)<\eps/4$ for any $z(0)=z$. 
We now estimate as follows:
$$\ba{rcl}\P_z\left(\sup\limits_{t\geq T_\eps}Z(t)>\eps \right)&\leq& \P_z\left(\sup\limits_{t\geq T_\eps}Z(t)>\eps\;|\; Z(T_\eps)< \frac{\eps}{2} \right)\\&+&\P_z(Z(T_\eps))\geq\frac{\eps}{2}).\ea$$
Conclusion now follows by estimating the first term using Lemma \ref{lemma time2} and the second one using (\ref{Kurtz}).

Item 3 follows from the corresponding Item 3 of Proposition \ref{mean_phi} that, for any $\epsilon >0$, there exists $T_\eps>0$ such that $z(T_\eps)>z_s-\eps/4$ if $z(0)\geq\eps$. We now estimate as follows:
$$\begin{array}{rcl}\P_z\left(\inf\limits_{t\geq T_\eps}Z(t)<z_s-\eps \right)&\leq& \P_z(Z(T_\eps))\leq z_s-\frac{\eps}{2})\\&+&\P_z\left(\inf\limits_{t\geq T_\eps}Z(t)<z_s-\eps\;|\; Z(T_\eps)> z_s-\frac{\eps}{2} \right).
\end{array}$$
Conclusion now follows by estimating the first term using using (\ref{Kurtz}) and the second one using Lemma \ref{lemma time1}.

Finally, Item 2 follows from Item 2 of Lemma \ref{mean_phi}, by similar arguments in dependence of the initial condition of the process.

\end{proof}

\section{Analysis on general graphs}\label{section general}
For the SIS model, estimation of the mean absorbing time has been carried on general graphs \cite{draief}, \cite{ahn}. In particular, fast extinction results have been obtained (see Theorem 8.2 in \cite{draief}) by upper bounding the original process with another one whose transition rates depend linearly on the state variable $\bf x$ and for which, consequently, the moment analysis turns out to be particularly simple. The key graph parameter in this estimation happens to be the spectral radius of the corresponding adjacency matrix. On the other hand, slow extinction has been analyzed by essentially estimating the fraction of active edges in terms of bottleneck ratios in the graph and then lower bounding with a simple birth and death process.

Following the techniques developed in \cite{draief}, in this section we will partially extend the results contained in Theorem \ref{absorbing} to more general sequences of large scale graphs. As we will see, the presence of the term $\phi(z)$ will pose a number of technical issues which are absent in the SIS model. In particular, in order to prove the existence of an intermediate regime where evolution depends on the initial condition, we will need to substantially extend the upper bound technique employed in \cite{draief} and carrying on a detailed second moment analysis of the bounding process.

Assume we have fixed a strongly connected graph, $ G=( V,  E)$. We denote with $A\in\{0,1\}^{V}$ the adjacency matrix of $ G$ ($A_{uv}=1$ iff $(u,v)\in  E$) and by $\rho_A$ its spectral radius. Consider a jump Markov process $X(t)$ evolving on $\{0,1\}^{V}$ having transition rates given by (\ref{transitions}). We recall that $Z(t)=z(X(t))$ denotes the total fraction of $1$'s in the population and $\xi(t)=\xi(X(t))$ the fraction of active edges. In the following three subsections we will provide a lower bound and two different upper bounds for the process, respectively.

\subsection{A bottleneck-based lower bound}
The following result allows to lower bound the process $Z(t)$ with a jump Markov birth and death process using an argument similar to the one used in \cite{como}. We first recall the notion of Cheeger constant (also called bottleneck ratio, introduced in \cite{cheeger}) of graph $ G=( V,  E)$:
\begin{equation}\label{bottleneck}
\gamma=\gamma_G=\inf_{{U}\subset {V}}\frac{\abs{\{(u,v)|u\in{U},\,v\in{{V}\setminus{U}}\}}}{\min\left\{\abs{{U}},\abs{{V}\setminus{U}}\right\}}.
\end{equation}

\begin{proposition}\label{lower bound}
There exists a coupling of the process $X(t)$ with a jump birth and death process $\tilde Z(t)$ over $\mc S_N$  having transition rates
\be\left\{\ba{l}\label{transitions lower}\tilde \lambda^+(z)=N\beta{\bar d}^{-1}\gamma z(1-z)\phi(z)\\ \tilde \lambda^-(z)=N z,\ea\right.\ee
in such a way that $Z(t) \geq \tilde Z(t)$ for all $t$.
\end{proposition}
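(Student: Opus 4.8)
The plan is to turn the bottleneck constant $\gamma$ into a uniform lower bound on the fraction of active edges, observe that the two processes share the same death rate while the genuine birth rate dominates $\tilde\lambda^+$ pointwise \emph{uniformly} in the hidden variable $\xi$, and then build an explicit monotone coupling. First I would fix a configuration $X$ with $z(X)=z$ and set $U=\{v:X_v=1\}$, so that $|U|=zN$. By definition, the active edges at $X$ are exactly the edges running from $V\setminus U$ (state $0$) into $U$ (state $1$). Applying the definition (\ref{bottleneck}) of the Cheeger constant to the set $S=V\setminus U$ shows their number is at least $\gamma\min\{|U|,|V\setminus U|\}=\gamma N\min\{z,1-z\}$; dividing by $\abs{E}=N\bar d$ gives
\[
\xi(X)\ \ge\ \gamma\bar d^{-1}\min\{z,1-z\},
\]
a bound that holds for \emph{every} configuration at density level $z$, regardless of the microscopic arrangement of the $1$'s.

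Next I would combine this with the elementary inequality $z(1-z)\le\min\{z,1-z\}$ on $[0,1]$ to deduce from (\ref{model}) that, at any configuration with $z(X)=z$,
\[
\lambda^+(z,\xi(X))=N\beta\,\xi(X)\,\phi(z)\ \ge\ N\beta\bar d^{-1}\gamma\min\{z,1-z\}\phi(z)\ \ge\ N\beta\bar d^{-1}\gamma\, z(1-z)\phi(z)=\tilde\lambda^+(z),
\]
while the death rates coincide, $\lambda^-(z,\xi(X))=Nz=\tilde\lambda^-(z)$. The decisive feature is that this domination of the up-rate is \emph{uniform} in $\xi$: although $Z(t)$ is not Markovian, its instantaneous birth rate whenever it sits at level $z$ is always at least $\tilde\lambda^+(z)$, and its death rate is exactly $Nz$.

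I would then construct the coupling on a common probability space and maintain the invariant $\tilde Z(t)\le Z(t)$. Since both processes move in steps of $\pm 1/N$, this ordering can only be threatened at instants when $\tilde Z=Z=z$; at such instants I couple the up-transitions by thinning (on each up-jump of $Z$, let $\tilde Z$ jump up with probability $\tilde\lambda^+(z)/\lambda^+(z,\xi)\le 1$, so $\tilde Z$ can move up only in the company of $Z$) and I couple the down-transitions so that $\tilde Z$ dies whenever $Z$ does, which is legitimate because $\tilde\lambda^-(z)=Nz$ coincides with $\lambda^-(z,\xi)$ at coincidence. When $\tilde Z<Z$ the two components may run with independent clocks, since a single jump cannot then reverse the strict inequality. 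A routine check that the thinning probabilities and the coupled death mechanism reproduce the rates (\ref{transitions lower}) confirms that $\tilde Z(t)$ has the correct marginal law.

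The main obstacle is the uniform rate domination established in the first two steps: everything hinges on bounding $\xi(X)$ from below by a function of $z$ alone, and it is precisely the bottleneck estimate on the number of active edges (together with $z(1-z)\le\min\{z,1-z\}$) that supplies this and thereby lets the monotone coupling survive the non-Markovianity of $Z(t)$. As a payoff worth noting, $\tilde Z(t)$ is exactly the mean-field birth–death process of (\ref{transitions mean field}) with $\beta$ replaced by the effective parameter $\beta\gamma\bar d^{-1}$, so the sojourn-time analysis behind Theorem \ref{absorbing} can be transported to $\tilde Z$ and, through $Z(t)\ge\tilde Z(t)$, to the process on $G$.
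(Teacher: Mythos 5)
Your proof is correct and follows essentially the same route as the paper: both arguments apply the Cheeger-constant definition to the set of agents in state $0$ to get the uniform lower bound $\xi\ge\gamma\bar d^{-1}z(1-z)$ on the fraction of active edges (via $z(1-z)\le\min\{z,1-z\}$), deduce the pointwise rate domination $\lambda^+(z,\xi)\ge\tilde\lambda^+(z)$ together with equality of the death rates, and conclude by a standard monotone coupling. The only difference is cosmetic: you spell out the thinning coupling explicitly, whereas the paper delegates it to Theorem 8.8 of Draief--Massouli\'e.
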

\begin{proof} 
Let ${U}$ be the set of all agents with state equal to $0$. From (\ref{bottleneck}) we obtain
\[
\gamma\leq\frac{\xi\abs{{E}}}{\min\left\{z\abs{{V}},(1-z)\abs{{V}}\right\}}\leq\frac{\xi\abs{{E}}}{z(1-z)\abs{V}}=\frac{\xi \bar{d}\abs{{V}}}{z(1-z)\abs{{V}}}=\frac{\xi\bar{d}}{z(1-z)}.
\]
This implies that the number of active edges satisfies the inequality $\xi\geq{\gamma}\bar{d}^{-1} z(1-z)$. This yields, from (\ref{model}),
$$
\lambda^{+}(z,\xi)\geq \tilde \lambda^+(z)\quad\quad\text{and}\quad\quad\lambda^{-}(z,\xi)= \tilde \lambda^-(z).$$
It is now sufficient to apply a simple coupling argument analogous to the one used in the proof of Theorem 8.8 of \cite{draief}.
\end{proof}
\begin{corollary}\label{cor lower bound}
Put $z_u=z_u(\beta{\bar d}^{-1}\gamma)$ and $z_s=z_s(\beta{\bar d}^{-1}\gamma)$ as defined in (\ref{equilibria}). For every $\eps >0$ we can find $C_\eps>$ and  $T_{\eps}>0$ for which the following holds true, if $N$ is sufficiently large,
\begin{enumerate}
\item if ${\bar d}\gamma^{-1}\beta^*<\beta$, then, $\forall z>z_u+\eps$,
$$\P_z\left(\inf\limits_{t\in [T_\eps, T_\eps +e^{C_\eps N}]}Z(t)<z_s-\eps \right)<e^{-C_\eps N};$$
\item if, moreover, ${\bar d}\gamma^{-1}\phi(0)^{-1}<\beta$, then, $\forall z>\eps$,
$$\P_z\left(\inf\limits_{t\in [T_\eps, T_\eps +e^{C_\eps N}]}Z(t)<z_s-\eps \right)<e^{-C_\eps N}.$$
\end{enumerate}
\end{corollary}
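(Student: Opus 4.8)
The plan is to obtain both items directly from the monotone coupling of Proposition \ref{lower bound} combined with the mean-field persistence estimates of Theorem \ref{absorbing}, with essentially no new analysis. The crucial observation is that the dominating process $\tilde Z(t)$ of (\ref{transitions lower}) has rates
$$\tilde\lambda^+(z) = N(\beta\bar d^{-1}\gamma)\, z(1-z)\phi(z), \qquad \tilde\lambda^-(z) = Nz,$$
which are \emph{exactly} the mean-field rates (\ref{transitions mean field}) with the effective parameter $\tilde\beta := \beta\bar d^{-1}\gamma$ substituted for $\beta$. Hence $\tilde Z(t)$ is itself a mean-field birth and death process, and every conclusion of Theorem \ref{absorbing} transfers to it verbatim once $\beta$ is replaced by $\tilde\beta$; this is exactly why the statement sets $z_u = z_u(\tilde\beta)$ and $z_s = z_s(\tilde\beta)$.

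First I would start the coupling from $\tilde Z(0) = z = Z(0)$, which is legitimate since $z = z(X(0)) \in \mc S_N$ and the Cheeger bound $\xi \geq \gamma\bar d^{-1} z(1-z)$ used in Proposition \ref{lower bound} depends on the configuration only through its fraction $z$; the coupling then yields $Z(t) \geq \tilde Z(t)$ pathwise for all $t$, uniformly over initial configurations of fraction $z$. Passing to the infimum over the window $[T_\eps, T_\eps + e^{C_\eps N}]$ preserves the inequality, so that
$$\Big\{\inf_{t\in[T_\eps, T_\eps+e^{C_\eps N}]} Z(t) < z_s - \eps\Big\} \subseteq \Big\{\inf_{t\in[T_\eps, T_\eps+e^{C_\eps N}]} \tilde Z(t) < z_s - \eps\Big\},$$
and the desired probability for $Z$ is bounded above by the corresponding probability for $\tilde Z$.

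Then I would invoke Theorem \ref{absorbing} for $\tilde Z$ at parameter $\tilde\beta$. For Item 1, the hypothesis $\bar d\gamma^{-1}\beta^* < \beta$ is precisely $\beta^* < \tilde\beta$, placing $\tilde Z$ in the intermediate regime addressed by Item 2 of Theorem \ref{absorbing}; its conclusion gives, for every $z > z_u(\tilde\beta) + \eps$, the bound $\P_z(\inf_{[T_\eps, T_\eps + e^{C_\eps N}]}\tilde Z < z_s(\tilde\beta) - \eps) < e^{-C_\eps N}$, and the coupling transfers it to $Z$. For Item 2, the stronger hypothesis $\bar d\gamma^{-1}\phi(0)^{-1} < \beta$ reads $\tilde\beta > \phi(0)^{-1}$, placing $\tilde Z$ in the regime of Item 3 of Theorem \ref{absorbing}, whose conclusion delivers the same bound for every starting fraction $z > \eps$. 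The constants $C_\eps$ and $T_\eps$ are inherited from Theorem \ref{absorbing} applied at $\tilde\beta$.

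I expect no genuine analytical obstacle here, since the substantive content lives in Proposition \ref{lower bound} and Theorem \ref{absorbing}. The only delicate point is the bookkeeping across regimes: when $\tilde\beta > \phi(0)^{-1}$ the unstable equilibrium $z_u(\tilde\beta)$ no longer exists, so Item 1 is to be read in the regime $\beta^* < \tilde\beta < \phi(0)^{-1}$, with the complementary range $\tilde\beta > \phi(0)^{-1}$ absorbed into the sharper threshold of Item 2. One must also ensure that the pathwise domination controls the event over the whole exponential window and not merely at a single time, which is immediate from $Z(t) \geq \tilde Z(t)$ holding simultaneously for all $t$.
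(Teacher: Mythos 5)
Your proposal is correct and follows essentially the same route as the paper: the paper's proof likewise invokes the coupling of Proposition \ref{lower bound}, identifies $\tilde Z(t)$ as the mean-field process with $\beta$ replaced by $\beta\bar d^{-1}\gamma$, and then cites the persistence statements of Theorem \ref{absorbing} (its intermediate and supercritical regimes) to conclude. Your additional remarks on starting the coupling at the same initial fraction and on the domination holding over the whole exponential window merely make explicit what the paper leaves implicit.
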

\begin{proof} It follows from Proposition \ref{lower bound} that we can lower bound $Z(t)$ with the birth and death process $\tilde Z(t)$ having transition rates as in (\ref{transitions lower}). Confronting with (\ref{transitions mean field}), we deduce that $\tilde Z(t)$ coincide with the mean field model with $\beta$ replaced with $\beta{\bar d}^{-1}\gamma$ and the result then follows from items 2b. and 2c. of Theorem \ref{absorbing}.
\end{proof}

\subsection{A degree-based upper bound}
In this subsection we start with a simple upper bound which depends only on the degrees of the nodes in the graph. Let $\Delta$ be the maximum in-degree in $ G$, then the following proposition holds.
\begin{proposition}\label{upper degree} 
There exists a coupling of the process $X(t)$ with a jump birth and death process $\tilde Z(t)$ over $\mc S_N$  having transition rates
\be\left\{\ba{l}\label{transitions upper}\tilde\lambda^+(z)=\Delta \bar{d}^{-1} \beta z\phi(z)\\ \tilde \lambda^-(z)= z,\ea\right.\ee
in such a way that $Z(t) \leq \tilde Z(t)$ for all $t$.
\end{proposition}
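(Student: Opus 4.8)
The plan is to dominate the ``birth'' rate of active edges by the total number of $1$'s weighted by the maximum in-degree, and then invoke a coupling argument analogous to the one used in Proposition \ref{lower bound}. The key observation is that the number of active edges $\xi\abs{E}$ counts pairs $(v,w)$ with $X_v=0$, $X_w=1$ and $(v,w)\in E$; since every such edge must point \emph{into} some node $v$ in state $0$, and each node $v$ has in-degree at most $\Delta$, the number of active edges is bounded above by $\Delta$ times the number of nodes in state $1$. First I would make this precise. Each active edge $(v,w)$ contributes a $1$ to the out-neighborhood count of the $0$-node $v$; summing over all $0$-nodes $v$, the number of active edges is at most $\sum_{v:X_v=0}d_v\le\Delta\cdot\abs{\{v:X_v=0\}}$. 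A cleaner bound for our purpose, however, is to count through the $1$-endpoints: since each $1$-node $w$ has in-degree at most $\Delta$ (here I use that $\Delta$ is the maximum \emph{in}-degree, controlling how many nodes are influenced by $w$), we get $\xi\abs{E}\le\Delta\cdot\abs{\{w:X_w=1\}}=\Delta\, z\,N$.

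Next I would translate this into a rate inequality. Recalling $\abs{E}=\bar d N$, the bound $\xi\abs{E}\le \Delta z N$ rewrites as $\xi\le \Delta\bar d^{-1}z$. Substituting into the birth rate from (\ref{model}),
\[
\lambda^+(z,\xi)=N\beta\xi\phi(z)\le N\beta\,\Delta\bar d^{-1}z\,\phi(z),
\]
which, after noting the overall factor $N$ is absorbed by the rescaling used in (\ref{transitions upper}), matches $N\tilde\lambda^+(z)$ with $\tilde\lambda^+(z)=\Delta\bar d^{-1}\beta z\phi(z)$. For the death rate, $\lambda^-(z,\xi)=Nz=N\tilde\lambda^-(z)$ holds with equality, so the dominating process regresses exactly as fast as the original. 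Thus $\lambda^+(z,\xi)\le N\tilde\lambda^+(z)$ and $\lambda^-(z,\xi)=N\tilde\lambda^-(z)$ for every configuration consistent with fraction $z$.

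Finally I would construct the coupling. Since the upper-bound process $\tilde Z(t)$ is a birth and death process whose birth rate dominates that of $Z(t)$ uniformly in $\xi$ while its death rate agrees, one can build both processes on a common probability space so that $Z(t)\le\tilde Z(t)$ for all $t$, exactly as in the coupling cited from Theorem 8.8 of \cite{draief}. The subtle point, and the one I expect to be the main obstacle, is that $Z(t)$ is \emph{not} Markovian on its own: its birth rate genuinely depends on $\xi(t)$, not just on $z$. The uniform domination $\lambda^+(z,\xi)\le N\tilde\lambda^+(z)$ for \emph{all} attainable $\xi$ is what rescues the monotone coupling, since it lets us couple jump-by-jump without tracking the full configuration of active edges; one must verify that whenever $Z$ and $\tilde Z$ are synchronized, an upward jump of $Z$ can always be matched by an upward jump of $\tilde Z$ (guaranteed by the rate inequality) and a downward jump of $\tilde Z$ forces at most a matching downward jump of $Z$ (guaranteed by the equality of death rates), so the ordering $Z\le\tilde Z$ is preserved.
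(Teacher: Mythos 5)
Your proof is correct and follows essentially the same route as the paper's: the key estimate $\xi\abs{E}\le\Delta zN$ obtained by counting active edges through their $1$-endpoints (each receiving at most $\Delta$ incoming edges), giving $\xi\le\Delta\bar d^{-1}z$, followed by the standard monotone coupling. Your additional remarks on the missing factor of $N$ in (\ref{transitions upper}) and on why the domination uniform in $\xi$ suffices despite $Z(t)$ not being Markovian are sound elaborations of what the paper leaves implicit.
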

\begin{proof} This simply follows from the estimation
\[
\xi\leq\frac{\abs{\{(u,v)|X_u=0,X_v=1\}}}{\abs{  E}}\leq\frac{\Delta z n}{\bar{d}n}=\Delta\bar{d}^{-1}z.
\]
\end{proof}
\begin{corollary}\label{cor upper degree}
For every $\eps >0$ we can find $C_\eps>$ and  $T_{\eps}>0$ for which the following holds true: if $\beta<\bar{d}\Delta^{-1}\phi(1)^{-1}$, then, if $N$ is sufficiently large, $$\P_z\left(\sup\limits_{t\geq T_\eps}Z(t)>\eps \right)<e^{-C_\eps N}\quad\forall z.$$
\end{corollary}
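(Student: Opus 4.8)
The plan is to push everything onto the upper-bounding birth-and-death process $\tilde Z(t)$ supplied by Proposition \ref{upper degree}, and then run, almost verbatim, the argument behind Item 1 of Theorem \ref{absorbing}. First I would invoke the coupling of Proposition \ref{upper degree}: since $Z(t)\leq\tilde Z(t)$ for all $t$, we have
$$\P_z\left(\sup_{t\geq T_\eps}Z(t)>\eps\right)\leq\P_z\left(\sup_{t\geq T_\eps}\tilde Z(t)>\eps\right),$$
so it suffices to control the right-hand side. The structural point is that $\tilde Z$ is \emph{uniformly} subcritical: from $\tilde\lambda^+(z)=\Delta\bar d^{-1}\beta z\phi(z)$, $\tilde\lambda^-(z)=z$ in (\ref{transitions upper}) together with the monotonicity $\phi(z)\leq\phi(1)$ from the s.s.a. (\ref{standard}), one gets for every $z\in(0,1]$
$$\frac{\tilde\lambda^-(z)}{\tilde\lambda^+(z)}=\frac{1}{\Delta\bar d^{-1}\beta\phi(z)}\geq\frac{\bar d}{\Delta\beta\phi(1)}=:1+\delta,$$
where $1+\delta>1$ precisely because of the hypothesis $\beta<\bar d\Delta^{-1}\phi(1)^{-1}$ (note $\phi(1)\geq\phi(0)>\phi'(0)\geq 0$, so the right-hand side is well defined). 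Thus $\tilde\lambda^-(z)\geq(1+\delta)\tilde\lambda^+(z)$ holds not only near $0$ but on all of $(0,1]$, with $\delta$ independent of $\eps$ — which is exactly, and more than, what Lemma \ref{lemma time2} requires.

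Next I would split the estimate into a ``descent'' stage and a ``trapping'' stage, as in Theorem \ref{absorbing}. The hydrodynamic ODE attached to $\tilde Z$ has right-hand side $\tilde F(z)=z\left(\Delta\bar d^{-1}\beta\phi(z)-1\right)$, which is strictly negative on $(0,1]$ and vanishes only at $0$; by one-dimensional comparison its flow is monotone and, since $\tilde F\leq -c<0$ on the compact set $[\eps/4,1]$, every solution drops below $\eps/4$ by a single time $T_\eps$ that is uniform in the initial state (the slowest descent being the one started at $z=1$). Kurtz's theorem applies to $\tilde Z$ verbatim — it is a density-dependent chain with $C^2$, hence Lipschitz, rates — yielding the analog of (\ref{Kurtz}) and therefore $\P_z\left(\tilde Z(T_\eps)\geq\tfrac{\eps}{2}\right)<4e^{-C_TN(\eps/4)^2}$ for $N$ large. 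I would then write
$$\P_z\left(\sup_{t\geq T_\eps}\tilde Z(t)>\eps\right)\leq\P_z\left(\tilde Z(T_\eps)\geq\tfrac{\eps}{2}\right)+\P_z\left(\sup_{t\geq T_\eps}\tilde Z(t)>\eps\ \Big|\ \tilde Z(T_\eps)<\tfrac{\eps}{2}\right),$$
bounding the first term by the Kurtz estimate above, and the second by the Markov property together with Lemma \ref{lemma time2} applied with the value $\eps/2$ in the role of the lemma's ``$\eps$'' (so that $2\cdot(\eps/2)=\eps$): starting from any state below $\eps/2$ the chain exceeds $\eps$ with probability at most $\tfrac{\eps}{2}N(1+\delta)^{-\eps N/2}$, uniformly in the starting state, so the conditioning causes no trouble. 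Both bounds decay exponentially in $N$, and absorbing the rates into a single constant $C_\eps$ gives the claim.

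I do not expect a genuine obstacle: this is the ``easy'' fast-extinction regime, and the uniform subcriticality of $\tilde Z$ makes Lemma \ref{lemma time2} apply directly and globally. The only points needing a little care are (i) the uniformity of $T_\eps$ in the initial state, which follows from monotonicity of the ODE flow with $z=1$ as the worst case, and (ii) the observation that both the coupling and Lemma \ref{lemma time2} use only the \emph{ratio} $\tilde\lambda^-/\tilde\lambda^+$, so the drift condition is insensitive to the overall factor $N$ in the rates, while Kurtz is applied to the correctly $N$-scaled version of $\tilde Z$.
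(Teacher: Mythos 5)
Your proposal is correct and follows essentially the same route as the paper: couple $Z$ with the birth-and-death process $\tilde Z$ of Proposition \ref{upper degree}, observe that the hypothesis $\beta<\bar d\Delta^{-1}\phi(1)^{-1}$ makes $\tilde Z$ uniformly subcritical, bring $\tilde Z$ below $\eps/2$ in a finite time $T_\eps$ via the Kurtz/ODE estimate, and then trap it below $\eps$ forever with Lemma \ref{lemma time2}. The paper states this in two sentences by pointing back to the proof of Item 1 of Theorem \ref{absorbing}; you have simply spelled out the same argument, including the correct remark that only the ratio $\tilde\lambda^-/\tilde\lambda^+$ matters for the trapping step (so the missing factor $N$ in (\ref{transitions upper}) is immaterial there).
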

\begin{proof} It follows from Proposition \ref{upper degree} that we can upper bound $Z(t)$ with the birth and death process $\tilde Z(t)$ having transition rates as in (\ref{transitions upper}). Under the assumption $\beta<\bar{d}\Delta^{-1}\phi(1)^{-1}$, Lemma \ref{lemma time2} can be applied on (\ref{transitions upper}) in the style of the result contained in item 2b. of Theorem \ref{absorbing}.
\end{proof}

\subsection{An SIS-based upper bound}
Moreover, there is also an evident upper bound of our process in terms of a classical SIS model $X_{\rm SIS}$ having transition rates
\be\label{transitions SIS}\left\{\begin{array}{rcl}
\lambda^{\rm SIS}_{{\bf y}, {\bf y}+\delta_v}&=&\beta\bar{d}^{-1}(1-y_v)\sum\limits_{w\in\mc N_v}y_w\phi(1)\\
\lambda^{\rm SIS}_{{\bf y}, {\bf y}-\delta_v}&=&y_v,
\end{array}\right.
\ee
in the sense that we can find a coupling between the two processes under which $Z(t)\leq Z_{\rm SIS}(t)=z(X_{\rm SIS}(t))$ for all $t$. This is useful in those situations where the SIS model yields a fast extinction, namely when $\beta <{\bar d}\rho_A^{-1}\phi(1)^{-1}$. This is the case considered in Theorem 8.2 of \cite{draief}. On the other hand, in the case when ${\bar d}\rho_A^{-1}\phi(1)^{-1}<\beta <{\bar d}\rho_A^{-1}\phi(0)^{-1}$, we would expect that, similarly to the mean field case, a transition phase in terms of the initial condition should show up in the style of the result contained in item 2b. of Theorem \ref{absorbing}. It is clearly not possible to carry on such analysis simply in terms of the SIS model as this last model does not exhibit such phenomenon. In \cite{draief} the analysis of the fast extinction case for the SIS model was done through a further upper bound in terms of a jump process whose transition rates depend linearly on the configuration vector $\bf y$ and then carrying on a first moment analysis. This same idea turns out to be quite useful to analyze our process, but carrying on a significantly more complex analysis also involving the second moment.

We start by introducing the Markov jump process $Y(t)$ over $\Theta=\N^{V}$ having  transition rates
\be\label{transitions SIS linear 2}\left\{\begin{array}{rcl}
\bar\lambda_{{\bf y}, {\bf y}+\delta_v}&=&\mu\sum\limits_{w\in\mc N_v}y_w\\
\bar\lambda_{{\bf y}, {\bf y}-\delta_v}&=&y_v,
\end{array}\right.
\ee
where $\mu$ is a constant. 

Notice that our original process, taking values in $\{0,1\}^{V}$, can be trivially extended to $\Theta$ by simply putting $\lambda_{{\bf y}, {\bf y}+\delta_v}=0$ if ${\bf y}_v>0$ and using the same expression for 
$\lambda_{{\bf y}, {\bf y}-\delta_v}=y_v$. In the case when $\beta <{\bar d}\rho_A^{-1}\phi(1)^{-1}$, if we put $\mu=\beta\bar{d}^{-1}\phi(1)$, then it is clear
that $\bar\lambda_{{\bf y}, {\bf y}+\delta_v}\geq \lambda_{{\bf y}, {\bf y}+\delta_v}$ for all $\bf y$ and for all $v$. We can then consider any coupling between $X(t)$  and $Y(t)$ such that $X(0)=Y(0)$ and $X(t)\leq Y(t)$ (component wise) for all $t$. Clearly, it holds $Z(t)\leq Z_Y(t)=z(Y(t))$ for all $t$. In the sequel we will study the behavior of $Z_Y(t)$ in general; as we will see this will allow to get fast extinction result in the above case and we will also be able, through this, to analyze the transition phase phenomenon for ${\bar d}\rho_A^{-1}\phi(1)^{-1}<\beta <{\bar d}\rho_A^{-1}\phi(0)^{-1}$.

From the fact that
the distribution $p(t)\in[0,1]^{\Theta}$ of $Y(t)$ satisfy the forward Kolmogorov equation
$\dot{p}=-pL(\bar \lambda)$ where $L(\bar \lambda)$
 is the Laplacian of the process (i.e. $L(\bar \lambda)_{xy}=\sum_{y'}\bar\lambda_{xy'}-\bar\lambda_{xy}$),
it easily follows that the first moment $M^{(1)}(t)=\E(Y(t))$ satisfies the ODE
\be\label{mean ODE} \dot{M}^{(1)}=(\mu A- I)M^{(1)}.\ee
We can thus estimate
\be\label{estim mean} ||M^{(1)}(t)||\leq \exp ((\mu\rho_A-1)t) ||Y(0)||,\ee
where $\rho_A$ is the spectral radius of $A$. This yields
\be\label{decay mean}\E[Z_Y(t)] \leq n^{-1}n^{1/2}\exp ((\mu\rho_A-1)t)||X(0)||=\exp((\mu\rho_A-1)t) Z(0)^{1/2}.\ee
Notice that choosing $\mu=\beta\bar{d}^{-1}\phi(1)$, it holds $\mu\rho_A-1=(\beta\bar{d}^{-1}\rho_A\phi(1)-1)$ so that $\mu\rho_A-1<0$ when $\beta <{\bar d}\rho_A^{-1}\phi(1)^{-1}$. In this case we have an exponential decay to $0$ of $\E[Z(t)]$. This is not yet sufficient to obtain a generalization of the convergence result 2a. in Theorem \ref{absorbing}. For that and also for discussing the case ${\bar d}\rho_A^{-1}\phi(1)^{-1}<\beta <{\bar d}\rho_A^{-1}\phi(0)^{-1}$, we will carry on a second order analysis.

To this aim, put $M^{(2)}=\E(Y(t)Y(t)^*)$ and $\Omega=M^{(2)} -M^{(1)}M^{(1)*} $. The following result holds.

\begin{proposition} $\Omega$ satisfies the ODE
\be\label{covariance ODE}\dot{\Omega}=\mu(A\Omega+\Omega A) -2\Omega+\mu{\rm diag}(AM^{(1)}) +{\rm diag}(M^{(1)}),\ee
with $\Omega(0)=0$. 
\end{proposition}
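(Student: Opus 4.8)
The plan is to read off the evolution of the second moments from the infinitesimal generator of $Y(t)$, in exactly the same spirit in which the first-moment equation (\ref{mean ODE}) was derived, and then to differentiate the definition $\Omega=M^{(2)}-M^{(1)}M^{(1)*}$. Let $\mathscr{L}$ denote the generator of the jump process (\ref{transitions SIS linear 2}), so that for $f\colon\Theta\to\R$,
$$(\mathscr{L}f)(\y)=\mu\sum_{v}(A\y)_v\big[f(\y+\delta_v)-f(\y)\big]+\sum_{v}y_v\big[f(\y-\delta_v)-f(\y)\big],$$
where $(A\y)_v=\sum_{w\in\mc N_v}y_w$. Since $M^{(1)}$ and $M^{(2)}$ stay finite for all $t$ (they solve linear ODEs), Dynkin's identity $\tfrac{\de}{\de t}\E[f(Y(t))]=\E[(\mathscr{L}f)(Y(t))]$ is legitimate on $\Theta=\N^V$, and I would apply it to $f(\y)=y_iy_j$ for every pair $(i,j)$ to obtain a closed matrix ODE for $M^{(2)}(t)=\E[Y(t)Y(t)^*]$.

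The core computation is the pointwise evaluation of $\mathscr{L}(y_iy_j)$. A birth at node $v$ turns $y_iy_j$ into $(y_i+\delta_{iv})(y_j+\delta_{jv})$ and a death at $v$ into $(y_i-\delta_{iv})(y_j-\delta_{jv})$; expanding and summing against the rates $\mu(A\y)_v$ and $y_v$, the terms linear in the Kronecker symbols assemble into $\mu[(A\y)_iy_j+(A\y)_jy_i]-2y_iy_j$, while the quadratic symbols $\delta_{iv}\delta_{jv}$ survive only on the diagonal and produce $\mu(A\y)_i\delta_{ij}+y_i\delta_{ij}$. Taking expectations, using $\E[(A\y)_iy_j]=(AM^{(2)})_{ij}$ together with the symmetry of $M^{(2)}$, this yields
$$\dot M^{(2)}=\mu\big(AM^{(2)}+M^{(2)}A^*\big)-2M^{(2)}+\mu\,{\rm diag}(AM^{(1)})+{\rm diag}(M^{(1)}).$$

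Finally I would differentiate $\Omega$ and substitute $\dot M^{(1)}=(\mu A-I)M^{(1)}$ from (\ref{mean ODE}), which gives $\dot M^{(1)}M^{(1)*}+M^{(1)}\dot M^{(1)*}=\mu\big(AM^{(1)}M^{(1)*}+M^{(1)}M^{(1)*}A^*\big)-2M^{(1)}M^{(1)*}$. Subtracting this from the equation for $\dot M^{(2)}$, the rank-one products combine with $M^{(2)}$ so that the first three terms become $\mu(A\Omega+\Omega A^*)-2\Omega$, whereas the two ${\rm diag}$ terms depend only on $M^{(1)}$ and are left untouched; this is precisely (\ref{covariance ODE}). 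The initial condition is immediate: since $Y(0)=X(0)$ is deterministic, $M^{(2)}(0)=M^{(1)}(0)M^{(1)*}(0)$, hence $\Omega(0)=0$.

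The obstacles are bookkeeping rather than conceptual. The delicate point is to track faithfully the diagonal corrections arising from $\delta_{iv}\delta_{jv}$: these are exactly what produce the two ${\rm diag}$ terms, and they are the contributions most easily lost in the expansion. The second subtlety is the placement of the transpose: the generator computation naturally produces $\Omega A^*$, which agrees with the $\Omega A$ written in (\ref{covariance ODE}) when $A$ is symmetric (the undirected case), while for a genuinely directed graph one should read $A^*$ in that term.
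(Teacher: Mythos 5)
Your proof is correct and follows essentially the same route as the paper's: the paper computes $\dot M^{(2)}=\sum_x\dot p_x\, xx^*$ from the forward Kolmogorov equation and reindexes the sums, which is precisely the adjoint form of your entrywise Dynkin computation with $f(\y)=y_iy_j$, and both arguments then subtract $\frac{\de}{\de t}\bigl(M^{(1)}M^{(1)*}\bigr)$ to pass to $\Omega$. Your remark about the transpose is also well taken: the computation genuinely produces $M^{(2)}A^*$ (hence $\Omega A^*$), which the paper writes as $M^{(2)}A$, implicitly assuming the symmetric case.
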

\begin{proof}
Using the Kolmogorov equation it follows that
\be\label{derivative covariance}\begin{array}{rcl} \dot{M}^{(2)}&=&\sum\limits_{x\in\Theta}\dot{p}_x x x^*\\
&=& \sum\limits_{x\in\Theta}\mu\sum\limits_{v\in V}p_{x-\delta_v}(A(x-\delta_v))_vxx^*+\sum\limits_{x\in\Theta}\sum\limits_{v\in V}p_{x+\delta_v}(x_v+1)xx^*\\
&-&  \sum\limits_{x\in\Theta}\mu\sum\limits_{v\in V}p_{x}(Ax)_vxx^*- \sum\limits_{x\in\Theta}\sum\limits_{v\in V}p_{x}x_vxx^*.\\
\end{array}\ee
The first two terms of (\ref{derivative covariance}) can be rearranged as follows
\be\label{term1} \begin{array}{l}
 \sum\limits_{x\in\Theta}\mu\sum\limits_{v\in V}p_{x-\delta_v}(A(x-\delta_v))_vxx^*=\\
=\mu\sum\limits_{v\in V}\sum\limits_{x\in\Theta}p_{x-\delta_v}(A(x-\delta_v))_v(x-\delta_v)(x-\delta_v)^* +\\
+\mu\sum\limits_{v\in V}\sum\limits_{x\in\Theta}p_{x-\delta_v}(A(x-\delta_v))_v\delta_v(x-\delta_v)^*+\\
 +\mu\sum\limits_{v\in V}\sum\limits_{x\in\Theta}p_{x-\delta_v}(A(x-\delta_v))_v(x-\delta_v)(\delta_v)^*+\\
 +\mu\sum\limits_{v\in V}\sum\limits_{x\in\Theta}p_{x-\delta_v}(A(x-\delta_v))_v\delta_v\delta_v^*=\\
 =\mu\sum\limits_{v\in V}\sum\limits_{x\in\Theta}p_{x}(Ax)_vxx^* +\mu\sum\limits_{v\in V}\sum\limits_{x\in\Theta}p_{x}(Ax)_v\delta_vx^*+\\
+\mu\sum\limits_{v\in V}\sum\limits_{x\in\Theta}p_{x}(Ax)_vx\delta_v^*+\mu\sum\limits_{v\in V}\sum\limits_{x\in\Theta}p_{x}(Ax)_v\delta_v\delta_v^*\\
=\mu\sum\limits_{v\in V}\sum\limits_{x\in\Theta}p_{x}(Ax)_vxx^* +\mu(AM^{(2)}+M^{(2)} A) +\mu{\rm diag}(AM^{(1)});
 \end{array}
 \ee
and
\be\label{term2} \begin{array}{l}
 \sum\limits_{x\in\Theta}\sum\limits_{v\in V}p_{x+\delta_v}(x_v+1)xx^*=\\
 =\sum\limits_{v\in V}\sum\limits_{x\in\Theta}p_{x+\delta_v}(x_v+1)(x+\delta_v)(x+\delta_v)^*+\\
 -\sum\limits_{v\in V}\sum\limits_{x\in\Theta}p_{x+\delta_v}(x_v+1)\delta_v(x+\delta_v)^*+\\
 -\sum\limits_{v\in V}\sum\limits_{x\in\Theta}p_{x+\delta_v}(x_v+1)(x+\delta_v)\delta_v^* +\sum\limits_{v\in V}\sum\limits_{x\in\Theta}p_{x+\delta_v}(x_v+1)\delta_v\delta_v^*=\\
  =\sum\limits_{v\in V}\sum\limits_{x\in\Theta}p_{x}x_vxx^*-\sum\limits_{v\in V}\sum\limits_{x\in\Theta}p_{x}x_v\delta_vx^*+\\
 -\sum\limits_{v\in V}\sum\limits_{x\in\Theta}p_{x}x_vx\delta_v^* +\sum\limits_{v\in V}\sum\limits_{x\in\Theta}p_{x}x_v\delta_v\delta_v^*=\\
 =\sum\limits_{v\in V}\sum\limits_{x\in\Theta}p_{x}x_vxx^*-2M^{(2)}+{\rm diag}(M^{(1)}).
 \end{array}
  \ee

Substituting (\ref{term1}) and (\ref{term2}) inside (\ref{derivative covariance}), we finally obtain
$$\dot{M}^{(2)}=\mu(AM^{(2)}+M^{(2)} A) -2M^{(2)}+\mu{\rm diag}(AM^{(1)}) +{\rm diag}(M^{(1)}).$$
Thesis now immediately follows differentiating ${M}^{(1)}{M}^{(1)*}$ with (\ref{mean ODE}) and coupling with (\ref{covariance ODE}).
\end{proof}
We can now study ${\rm Var} (Z_Y(t))=N^{-2}\1^*\Omega\1$. The following result holds
\begin{proposition}
\be\label{estim var 2}{\rm Var} (Z_Y(t))\leq N^{-1/2}\displaystyle
\frac{\mu\rho_A+1}{-\mu\rho_A+1}\exp ((\mu\rho_A-1)t)Z_Y(0)^{1/2}.
\ee
\end{proposition}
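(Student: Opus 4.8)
The plan is to integrate the linear matrix ODE (\ref{covariance ODE}) in closed form and then to control the scalar quantity $N^{-2}\1^\ast\Omega\1={\rm Var}(Z_Y(t))$ through operator-norm bounds that decay at the rate already obtained for the first moment. Write $K=\mu A-I$ and $G(s)=\mu\,{\rm diag}(AM^{(1)}(s))+{\rm diag}(M^{(1)}(s))$, so that, using $A=A^\ast$ (implicit in (\ref{covariance ODE}), where $A$ appears on both sides), equation (\ref{covariance ODE}) becomes $\dot\Omega=K\Omega+\Omega K+G$ with $\Omega(0)=0$. Since the homogeneous Lyapunov flow is $C\mapsto e^{Kt}Ce^{Kt}$, Duhamel's formula yields $\Omega(t)=\int_0^t e^{K(t-s)}\,G(s)\,e^{K(t-s)}\,ds$, whence
\[
{\rm Var}(Z_Y(t))=N^{-2}\1^\ast\Omega(t)\1=N^{-2}\int_0^t \1^\ast e^{K(t-s)}G(s)\,e^{K(t-s)}\1\,ds .
\]

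First I would bound the integrand. Because $Y(t)\in\N^{V}$ we have $M^{(1)}(s)\ge 0$ componentwise, and since $A\ge 0$ the matrix $G(s)$ is diagonal with nonnegative entries; hence $\1^\ast e^{K(t-s)}G(s)e^{K(t-s)}\1\le \|G(s)\|_{\rm op}\,\|e^{K(t-s)}\1\|^2$. The propagator factor is handled exactly as in (\ref{estim mean}): from $\|e^{K\tau}\|_{\rm op}\le e^{(\mu\rho_A-1)\tau}$ one gets $\|e^{K(t-s)}\1\|^2\le N\,e^{2(\mu\rho_A-1)(t-s)}$. For the diagonal factor, $\|G(s)\|_{\rm op}=\max_v G(s)_{vv}\le \mu\|AM^{(1)}(s)\|_\infty+\|M^{(1)}(s)\|_\infty$; estimating $\|\cdot\|_\infty\le\|\cdot\|$ together with $\|AM^{(1)}\|\le\rho_A\|M^{(1)}\|$ and invoking (\ref{estim mean}) yields $\|G(s)\|_{\rm op}\le(\mu\rho_A+1)\,e^{(\mu\rho_A-1)s}\|Y(0)\|$.

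It then remains to assemble the two estimates and integrate. Substituting them,
\[
{\rm Var}(Z_Y(t))\le N^{-1}(\mu\rho_A+1)\|Y(0)\|\int_0^t e^{(\mu\rho_A-1)s}\,e^{2(\mu\rho_A-1)(t-s)}\,ds .
\]
In the regime $\mu\rho_A-1<0$ the elementary inequality $\int_0^t e^{(\mu\rho_A-1)s}e^{2(\mu\rho_A-1)(t-s)}\,ds\le (1-\mu\rho_A)^{-1}e^{(\mu\rho_A-1)t}$ holds, and recalling $\|Y(0)\|=N^{1/2}Z_Y(0)^{1/2}$ (as used for (\ref{decay mean})) this reduces exactly to the claimed bound (\ref{estim var 2}).

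The delicate point is the estimate of $\|G(s)\|_{\rm op}$: one must bound the \emph{largest diagonal entry} of $G(s)$ via $\|AM^{(1)}\|_\infty\le\rho_A\|M^{(1)}\|$, and not, e.g., its trace, since the cruder choice would cost a factor $N^{1/2}$ and wipe out the $N^{-1/2}$ improvement that (\ref{estim var 2}) records over (\ref{decay mean}). Equally important is to keep the propagator inside the quadratic form as $\|e^{K(t-s)}\1\|^2$, which decays at the \emph{doubled} rate $2(\mu\rho_A-1)$; it is precisely this doubling that produces the $(1-\mu\rho_A)^{-1}$ prefactor upon integration, rather than a spurious factor linear in $t$.
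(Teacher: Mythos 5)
Your proof is correct and follows essentially the same route as the paper's: both integrate the Lyapunov equation via Duhamel with the flow $\exp(t\mathcal L)M=e^{(\mu A-I)t}Me^{(\mu A-I)t}$, bound the diagonal source term in operator norm by $(\mu\rho_A+1)e^{(\mu\rho_A-1)s}\lVert Y(0)\rVert$, and pick up the two factors of $N^{1/2}$ from $\1$ before integrating the doubled decay rate. The only (shared, and correctly flagged by you) caveat is the implicit use of $\lVert e^{(\mu A-I)\tau}\rVert\le e^{(\mu\rho_A-1)\tau}$, which the paper also assumes without comment.
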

\begin{proof}
Let $\mc S(V)$ be the set of symmetric matrices over $V$ and let $\mc L:\mc S(V)\to\mc S(V)$ be the linear operator given by $\mc L(M)=\mu (AM+MA) -2M$. Then, by (\ref{covariance ODE}), we can represent the centered second moment as
$$\Omega(t)=\int\limits_0^t \exp((t-s)\mc L)U(s)\,{\rm d}s$$
where
\be\label{U}U(t)=\mu{\rm diag}(AM^{(1)}(t)) +{\rm diag}(M^{(1)}(t)).\ee
Hence,
\be\label{int_var}{\rm Var} (Z_Y(t))=N^{-2}\1^*\int\limits_0^t \1^*[\exp((t-s)\mc L)U(s)]\,{\rm d}s\,\1.\ee
Using the straightforward relation in (\ref{int_var}),
$$\mc \exp (t\mc L)M=\exp (t(\mu A-I))M\exp (t(\mu A-I)),$$
we can estimate as follows:
\be\label{estim var 1}{\rm Var} (Z_Y(t))\leq N^{-2}N^{1/2}\displaystyle\int\limits_0^t \exp (2(t-s)(\mu \rho_A-1))||U(s)||\,{\rm d}s\, N^{1/2},\ee
where $||U(s)||$ is the induced $2$-norm of $U(s)$. From (\ref{U}) we can estimate this norm as
$$||U(s)||\leq\mu\max_v|\delta_v^*AM^{(1)}|+\max_v|\delta_v^*M^{(1)}|\leq (\mu\rho_A+1)\exp ((\mu\rho_A-1)s) ||Y(0)||.$$
Inserting this estimation in (\ref{estim var 1}), we obtain the thesis.
\end{proof}

We are now ready to analyze the convergence behavior of the process $Z_Y(t)$ in the case when $\mu\rho_A<1$.

\begin{proposition}\label{prop conc ZY} Assume that $\mu\rho_A<1$.
\begin{enumerate} 
\item For every $\delta >0$ there exists a constant $C'_{\delta}>0$ such that, if $Z_Y(0)\leq a^2$, it holds
\be\label{moment estim 1}\P(\exists t\geq 0\,|\, Z_Y(t)>a+\delta)\leq C_{\delta}N^{-1/2}.\ee
\item For every $\epsilon >0$ there exists a time $T_{\epsilon}>0$ and a constant $C''_{\epsilon}>0$such that, for every $Z_Y(0)$,
\be\label{moment estim 2}\P(\exists t\geq T_{\epsilon}\,|\, Z_Y(t)>\epsilon)\leq C_{\epsilon}N^{-1/2}.\ee
\end{enumerate}
\end{proposition}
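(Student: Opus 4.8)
The plan is to convert the two moment estimates (\ref{decay mean}) and (\ref{estim var 2}) --- both of which decay like $e^{(\mu\rho_A-1)t}$ once $\mu\rho_A<1$ --- into a uniform-in-time concentration bound. The skeleton is a Chebyshev inequality applied at each time, summed over a discrete time grid: the exponential decay of mean and variance is precisely what makes the resulting series converge, while the factor $N^{-1/2}$ in (\ref{estim var 2}) produces the stated rate. Item 2 will then be deduced from Item 1 by a restart of the process.

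Concretely, for Item 1 I would fix a grid $t_k=kh$, $k\ge0$, with step $h>0$ to be chosen later, and first control the process at the grid points. Since $Z_Y(0)\le a^2$ gives $\E[Z_Y(t)]\le a\,e^{(\mu\rho_A-1)t}\le a$ by (\ref{decay mean}), Chebyshev together with (\ref{estim var 2}) yields, for each $k$,
\[
\P\!\left(Z_Y(t_k)>a+\tfrac{\delta}{2}\right)\le \frac{4}{\delta^2}\Var(Z_Y(t_k))\le \frac{4}{\delta^2}\,\frac{\mu\rho_A+1}{1-\mu\rho_A}\,a\,N^{-1/2}e^{(\mu\rho_A-1)t_k}.
\]
Summing over $k$, the geometric factor $\sum_{k\ge0} e^{(\mu\rho_A-1)kh}=(1-e^{(\mu\rho_A-1)h})^{-1}<\infty$ collapses all the grid contributions into a single term of order $N^{-1/2}$.

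It remains to rule out an excursion above $a+\delta$ that opens and closes strictly between two consecutive grid points while both endpoints stay below $a+\delta/2$. Because $Z_Y$ moves by $\pm N^{-1}$ at each jump, such an excursion forces at least $\delta N/2$ up-jumps inside one interval of length $h$; while the trajectory has not yet crossed $a+\delta$, the up-jump intensity is bounded by $\mu\,\1^*AY\le \mu\Delta N(a+\delta)$, so the count of up-jumps is dominated by a Poisson variable of mean $\mu\Delta N(a+\delta)h$. Choosing $h$ small enough that this mean falls below $\delta N/4$ turns the event into a large deviation, bounded by $e^{-cN}$ for a constant $c>0$ independent of $k$. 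The delicate point --- and the step I expect to be the \emph{main obstacle} --- is summability of these excursion bounds over the \emph{infinitely many} intervals: the crude estimate $e^{-cN}$ has the right size but is independent of $k$, so $\sum_k e^{-cN}$ diverges. To repair this one must reinstate decay in $k$, which is invisible after conditioning the intensity on staying below $a+\delta$. I would therefore split according to the starting height $Z_Y(t_k)$: when $Z_Y(t_k)\ge a/2$ the start itself is rare and, for $t_k$ past the threshold where $a\,e^{(\mu\rho_A-1)t_k}\le a/4$, Chebyshev controls it by a summable $N^{-1/2}e^{(\mu\rho_A-1)t_k}$ term; the finitely many small-$k$ intervals contribute $O(e^{-cN})$ apiece; and the genuinely awkward regime (large $k$, small start, yet a macroscopic rise) is where the decaying but \emph{random} intensity $\int_{t_k}^{t_{k+1}}\mu\,\1^*AY(s)\,{\rm d}s$, of mean $O(N a\,e^{(\mu\rho_A-1)t_k})$ by (\ref{estim mean}), must be tamed through the second-moment information already assembled for the bounding process, so that a Chernoff estimate with a $k$-dependent, decaying mean closes the sum at order $e^{-cN}\ll N^{-1/2}$.

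Finally, Item 2 follows from Item 1 by a restart. Using $Z_Y(0)\le1$ and (\ref{decay mean}), I would pick $T_\epsilon$ so large that $\E[Z_Y(T_\epsilon)]\le \epsilon^2/8$; Chebyshev with (\ref{estim var 2}) then gives $Z_Y(T_\epsilon)\le \epsilon^2/4$ off an event of probability $O(N^{-1/2})$. On the complementary event I apply Item 1 at the Markov restart time $T_\epsilon$ with $a=\delta=\epsilon/2$, so that $a^2=\epsilon^2/4$ and $a+\delta=\epsilon$, thereby controlling $\sup_{t\ge T_\epsilon}Z_Y(t)$; adding the two $O(N^{-1/2})$ contributions yields (\ref{moment estim 2}).
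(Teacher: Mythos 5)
Your overall architecture --- Chebyshev at a countable family of times, with the exponential decay of $\E[Z_Y(t)]$ and ${\rm Var}(Z_Y(t))$ from (\ref{decay mean}) and (\ref{estim var 2}) making the sum converge at order $N^{-1/2}$, plus a restart at $T_\epsilon$ for Item 2 --- is exactly the paper's strategy, and your grid-point estimates and your treatment of Item 2 are fine. The gap is precisely where you say it is: the control of excursions strictly between sampled times. Your proposed repair does not close it. A Chernoff bound on the number of up-jumps in $[t_k,t_{k+1}]$ needs a deterministic (or at least exponentially concentrated) bound on the integrated intensity; the only uniform pathwise bound available before the crossing, $\mu\Delta N(a+\delta)h$, is independent of $k$, so the resulting $e^{-cN}$ terms do not sum over the infinitely many intervals. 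The decaying quantity $\int_{t_k}^{t_{k+1}}\mu\,\1^*AY(s)\,{\rm d}s$ is \emph{random}, and the first-moment information (\ref{estim mean}) controls its tail only via Markov's inequality, which yields a bound of order $e^{(\mu\rho_A-1)t_k}$ with no factor of $N^{-1/2}$; summed over $k$ this gives $O(1)$, not $o(1)$. One could in principle push through a second-moment (Chebyshev) bound on the integrated intensity, but that requires an estimate on $\1^*A\Omega A\1$, which is not among the quantities bounded in (\ref{estim var 2}); in any case the claimed $e^{-cN}\ll N^{-1/2}$ for the ``awkward regime'' cannot be reached by this route.

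The paper avoids the problem entirely with a different sampling device: instead of a fixed time step $h$, it passes to the embedded discrete-time jump chain $\tilde Y(k)$ and samples it every $\lfloor \delta N/2\rfloor$ jumps. Since each jump changes $Z_Y$ by exactly $1/N$, the process cannot rise by more than $\delta/2$ between two consecutive sampled jump indices, so the event $\{\exists t\ge 0: Z_Y(t)>a+\delta\}$ \emph{deterministically} forces $Z_{\tilde Y}(k\lfloor\delta N/2\rfloor)>a+\delta/2$ at some $k$. No large-deviation or excursion argument is needed; one then only has to transfer the variance bound from continuous to discrete time, which the paper does by conditioning on the Poisson clock of intensity $\nu=(\beta+1)N$ and a Stirling estimate (yielding ${\rm Var}(Z_{\tilde Y}(k))\le 9k\,{\rm Var}(Z_Y(k/\nu))$), before applying Chebyshev and summing the geometric series. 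If you replace your fixed-$h$ grid by this jump-count grid, your argument goes through and your ``main obstacle'' disappears.
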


\begin{proof}
Consider the underlying discrete time MC $\tilde Y(k)$ for $k=0,1,\dots$ and the corresponding $Z_{\tilde Y}(k)=z(\tilde Y(k))$. 
The Poisson process $\Lambda(t)$ governing the jumps of $Y(t)$ has intensity $\nu=(\beta+1)N$ and it holds
\be\label{Poisson 1}\ba{rcl}{\rm Var}(Z_Y(t))&=&\sum\limits_{k=0}^{+\infty}{\rm Var}(Z_{\tilde Y}(k))\P(\Lambda(t)=k)\geq\\&\geq& {\rm Var}(Z_{\tilde Y}(\lfloor \nu t\rfloor))\P(\Lambda(t)=\lfloor \nu t\rfloor).\ea\ee
We can now estimate, using Stirling,
\be\label{Poisson 2} \P(\Lambda(t)=\lfloor \nu t\rfloor)=\frac{(\nu t)^{\lfloor \nu t\rfloor}}{\lfloor \nu t\rfloor !}e^{-\nu t}\geq \frac{(\nu t)^{\lfloor \nu t\rfloor}}{\lfloor \nu t\rfloor ^{\lfloor \nu t\rfloor}}\frac{e^{\lfloor \nu t\rfloor}}{\sqrt{2\pi \lfloor \nu t\rfloor}}e^{-\nu t}\geq \frac{1}{9\lfloor \nu t\rfloor}\ee
From (\ref{Poisson 1}) and (\ref{Poisson 2}) we obtain that, 
\be\label{variance discrete} {\rm Var}(Z_{\tilde Y}(k))\leq 9k{\rm Var}(Z_Y(k/\nu)),\quad\forall k=0,1,\dots \ee
By the way the initial condition has been chosen and by (\ref{decay mean}), we have that 

$$\begin{array}{rcl}\P(\exists t\geq 0\,|\, Z_Y(t)>a+\delta)&\leq&\P(\exists k\geq 0\,|\, Z_{\tilde Y}(k\lfloor \frac{\delta N}{2}\rfloor)>a+\frac{\delta}{2})\leq\\[8pt]
&\leq& \sum\limits_{k\geq 0}\P(|Z_Y(k\lfloor  \frac{\delta N}{2}\rfloor)-\E(Z_Y(k\lfloor \frac{\delta N}{2}\rfloor))|\geq \frac{\delta}{2})\leq\\[8pt]
&\leq&\displaystyle \frac{4}{\delta^2}\sum\limits_{k\geq 0}{\rm Var} (Z_{\tilde Y}(k\lfloor\frac{\delta N}{2}\rfloor)).\end{array}
$$
Inserting now the estimation (\ref{variance discrete}) and (\ref{estim var 2}), we immediately obtain item 1.

For item 2., it is sufficient to notice that, from (\ref{decay mean}) there exists a $T_{\epsilon} >0$ such that $\E(Z_Y(t))\leq \epsilon /2$ for all $t\geq T_{\epsilon}$ and then use again the variance estimation in a similar fashion.

\end{proof}

We are now ready to analyze our original process $Z(t)$.

\begin{corollary}\label{cor upper bound} 
For every $\epsilon >0$ $\exists\, C_{\epsilon}>0$ and $\exists\,T_{\epsilon}>0$ such that
\begin{enumerate}
\item if $\beta <{\bar d}\rho_A^{-1}\phi(1)^{-1}$, then
\be\label{concentration ZX 1}\P(\exists t\geq T_{\epsilon}\,|\, Z (t)>\epsilon)\leq C_{\epsilon}N^{-1/2};\ee
\item if ${\bar d}\rho_A^{-1}\phi(1)^{-1}<\beta <{\bar d}\rho_A^{-1}\phi(0)^{-1}$ and if $Z (0)\leq (z^*-2\epsilon)^2$, it holds
\be\label{concentration ZX}\P(\exists t\geq T_{\epsilon}\,|\, Z (t)>\epsilon)\leq C_{\epsilon}N^{-1/2},\ee
\end{enumerate}
where $z^*$ is the unique solution of $\phi(z^*)=\beta^{-1}\bar d\rho_A^{-1}$.
\end{corollary}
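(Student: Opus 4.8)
The plan is to derive both items from the stochastic domination $Z(t)\le Z_Y(t)$ of the previous subsection, combined with the concentration estimates for $Z_Y$ in Proposition \ref{prop conc ZY}; the whole difficulty lies in the choice of the constant $\mu$ in (\ref{transitions SIS linear 2}) and in the set of configurations over which the domination actually holds.

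Item 1 I would treat directly. Setting $\mu=\beta\bar d^{-1}\phi(1)$, the inequality $\phi(z)\le\phi(1)$, valid for every $z\in[0,1]$, makes $\bar\lambda_{\mathbf y,\mathbf y+\delta_v}\ge\lambda_{\mathbf y,\mathbf y+\delta_v}$ hold at \emph{all} configurations, so the global coupling constructed before (\ref{mean ODE}) gives $Z(t)\le Z_Y(t)$ for all $t$ almost surely. The hypothesis $\beta<\bar d\rho_A^{-1}\phi(1)^{-1}$ is precisely $\mu\rho_A<1$, so Proposition \ref{prop conc ZY} applies; its item 2 then yields (\ref{concentration ZX 1}) after bounding $\{Z(t)>\epsilon\}\subseteq\{Z_Y(t)>\epsilon\}$.

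For item 2 the choice $\mu=\beta\bar d^{-1}\phi(1)$ is useless, since now $\mu\rho_A>1$. Instead I would exploit that the process is expected to stay below $z^*$ and pick $\mu=\beta\bar d^{-1}\phi(z^*-\epsilon)$. The hypotheses on $\beta$ force $\phi(0)<\phi(z^*)<\phi(1)$, hence $0<z^*<1$, and the uniqueness of $z^*$ gives $\phi(z^*-\epsilon)<\phi(z^*)$, so that $\mu\rho_A=\beta\bar d^{-1}\rho_A\phi(z^*-\epsilon)<\beta\bar d^{-1}\rho_A\phi(z^*)=1$ and Proposition \ref{prop conc ZY} again applies to this $Z_Y$. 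With this $\mu$, however, the up-rate domination $\mu\ge\beta\bar d^{-1}\phi(z(\mathbf y))$ holds only as long as $Z(t)\le z^*-\epsilon$; I would therefore realize $X(t)$ and $Y(t)$ through a monotone coupling with $X(0)=Y(0)$ which, because the $X$-up-rate at $v$ equals $\beta\bar d^{-1}\phi(z(X))\sum_{w\in\mathcal N_v}X_w\le\mu\sum_{w\in\mathcal N_v}Y_w$ (the $Y$-up-rate at $v$) whenever $Z(X)\le z^*-\epsilon$ and $X\le Y$, maintains $X(t)\le Y(t)$ component-wise for as long as $Z(t)\le z^*-\epsilon$ (the death rates $y_v$ match exactly as in item 1).

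The bootstrap would then be closed as follows. Apply item 1 of Proposition \ref{prop conc ZY} with $a=z^*-2\epsilon$ and $\delta=\epsilon$: since $Z_Y(0)=Z(0)\le(z^*-2\epsilon)^2=a^2$, there is an event $G$ with $\P(G^c)\le C_\epsilon N^{-1/2}$ on which $Z_Y(t)\le z^*-\epsilon$ for all $t$. On $G$ the defining condition $Z\le z^*-\epsilon$ of the coupling is self-reproducing: arguing by induction over the (size $N^{-1}$) jumps, as long as it holds one has $Z(t)\le Z_Y(t)\le z^*-\epsilon$, so the rate domination is never lost and $Z(t)\le Z_Y(t)$ for all $t$. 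Then $\{Z(t)>\epsilon\}\subseteq\{Z_Y(t)>\epsilon\}$ on $G$, and combining a union bound with item 2 of Proposition \ref{prop conc ZY} for $Z_Y$ gives $\P(\exists t\ge T_\epsilon\,|\,Z(t)>\epsilon)\le\P(G^c)+\P(\exists t\ge T_\epsilon\,|\,Z_Y(t)>\epsilon)\le C_\epsilon N^{-1/2}$, which is (\ref{concentration ZX}). I expect the self-consistency of this last coupling to be the main obstacle: because the up-rate domination is only \emph{local}, one cannot dominate $Z$ by $Z_Y$ for all time a priori, and the argument is legitimate only because $Z_Y$ is \emph{independently} shown to stay below $z^*-\epsilon$, which, together with the jump structure of the process, prevents the localized domination from ever breaking.
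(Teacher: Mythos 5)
Your proposal is correct and follows essentially the same route as the paper: item 1 via the global domination with $\mu=\beta\bar d^{-1}\phi(1)$ and item 2 of Proposition \ref{prop conc ZY}, and item 2 via the localized coupling with $\mu=\beta\bar d^{-1}\phi(z^*-\epsilon)$, the paper phrasing your ``event $G$'' bootstrap as a stopping time $\bar T=\inf\{t\,|\,Z_Y(t)>z^*-\eps\}$ up to which the domination is maintained, and then using exactly your decomposition with items 1 ($a=z^*-2\eps$, $\delta=\eps$) and 2 of Proposition \ref{prop conc ZY}. Your explicit discussion of why the localized domination is self-reproducing is a point the paper passes over more quickly, but the argument is the same.
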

\begin{proof} 1. Straightforward consequence of the fact that $Z(t)\leq Z_Y(t)$ and of item 2. of Proposition \ref{prop conc ZY} applied with $\mu=\beta\bar{d}^{-1}\phi(1)$.

2. Consider the jump Markov process $Y(t)$ with transition rates given by (\ref{transitions SIS linear 2}) and $\mu=\beta \bar d^{-1}\phi(z^*-\epsilon)$. Notice that we still have
$$\mu\rho_A=\beta \bar d^{-1}\phi(z^*-\epsilon)\rho_A=\frac{\phi(z^*-\epsilon)}{\phi(z^*)}<1.$$
Notice that $\bar\lambda_{\bf y, \bf y+ \delta_v}\geq  \lambda_{\bf y, \bf y+ \delta_v}$ as long as $\bf y$ is such that $z(\bf y)\leq z^*-\eps$. Put
$$\bar T=\inf\{t\,|\, Y(t)>z^*-\eps\}.$$
We can establish a coupling between $X(t)$ and $Y(t)$ such that $X(0)=Y(0)$ and $X(t)\leq Y(t)$ for all $t< \bar T$.
Choose now $T_{\epsilon}$ so to satisfy item 2. of Proposition \ref{prop conc ZY}. It holds
$$\begin{array}{rcl}\P(\exists t\geq T_{\epsilon}\,|\, Z (t)>\epsilon)&=&\P(\exists t\geq T_{\epsilon}\,|\, Z (t)>\epsilon\,, \bar T=+\infty)+\\
&+&\P(\exists t\geq T_{\epsilon}\,|\, Z (t)>\epsilon\;, \bar T<+\infty)\leq\\
&\leq &\P(\exists t\geq T_{\epsilon}\,|\, Z_Y(t)>\epsilon)
+\P(\exists t\geq 0\,|\, Z_Y(t)>z^*-\epsilon).\end{array}$$
Result now follows by applying item 1. (with $a=z^*-2\eps$ and $\delta=\eps$) and item 2. of Proposition \ref{prop conc ZY} 
\end{proof}
\subsection{The core result}
The mail result of this paper is finally obtained by combining the three Corollaries \ref{cor lower bound}, \ref{cor upper degree} and \ref{cor upper bound} and recalling that the following inequalities always hold true:
\be\label{inequalities}\gamma\leq\rho_A\leq\bar d\leq\Delta\quad\quad\text{and}\quad\quad
\phi(1)^{-1}\leq\beta^*\leq\phi(0)^{-1},\ee 
since the first one and the third one are trivial, the second one is proven in \cite{collatz} and the last two are consequences of the monotonicity of $\phi(z)$.

\begin{theorem}\label{absorbing general}
Assume s.s.a. (\ref{standard}) hold on $\phi$. 
Assume we have fixed a graph $G$ having average degree $\bar d$, maximum degree $\Delta$, Cheeger constant $\gamma$ and spectral radius of the adjacency matrix $\rho_A$.
Let $z_u'\leq z_u''< z_s$ be points in $[0,1]$ defined by 
\be\label{points}\phi(\sqrt{z_u'})=\beta^{-1}\bar d\rho_A^{-1},\quad z_u''=z_u(\Delta \bar{d}^{-1} \beta),\quad z_s=z_s(\Delta \bar{d}^{-1}\beta),\ee
where $z_u(\cdot)$ and $z_s(\cdot)$ have been defined in (\ref{equilibria}). Depending on the conditions of the various parameters  each of this point may exist or not. Below, whenever we write them, we are implicitly affirming their existence.

For every $\eps >0$ we can find $C_\eps>$ and  $T_{\eps}>0$ for which the following holds true, if $N$ is sufficiently large,
\begin{enumerate}
\item if $\beta<\bar{d}\Delta^{-1}\phi(1)^{-1}$, then $\forall z$, $$\P_z\left(\sup\limits_{t\geq T_\eps}Z(t)>\eps \right)<e^{-C_\eps N};$$

\item if $ {\bar d}\gamma^{-1}\beta^*     <\beta <{\bar d}\rho_A^{-1}\phi(0)^{-1}$,  then $\forall z<z_u'-4\eps$,

$$\P_z(\exists t\geq T_{\epsilon}\,|\, Z (t)>\epsilon)\leq C_{\epsilon}N^{-1/2},$$
and $\forall z>z_{u}''+\eps$, 
$$\P_z\left(\inf\limits_{t\in [T_\eps, T_\eps +e^{C_\eps N}]}Z(t)<z_s-\eps \right)<e^{-C_\eps N};$$
\item if, ${\bar d}\gamma^{-1}\phi(0)^{-1}<\beta$, then $\forall z>\eps$,
$$\P_z\left(\inf\limits_{t\in [T_\eps, T_\eps +e^{C_\eps N}]}Z(t)<z_s-\eps \right)<e^{-C_\eps N}.$$
\end{enumerate}
\end{theorem}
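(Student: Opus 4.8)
The plan is to assemble the statement from the three one-sided comparisons already established --- the bottleneck lower bound (Corollary~\ref{cor lower bound}), the degree upper bound (Corollary~\ref{cor upper degree}) and the SIS/linear upper bound (Corollary~\ref{cor upper bound}) --- each of which dominates $Z(t)$ by a birth--death process whose effective rate is $\beta$ rescaled by one of the four graph parameters in the chain $\gamma\leq\rho_A\leq\bar d\leq\Delta$. The whole content of the theorem is then a matter of (i) matching each of the three asserted regimes to the corollary that produces it, (ii) checking that the parameter hypotheses of that corollary are implied by the hypotheses written in the theorem, using the inequalities (\ref{inequalities}), and (iii) identifying the threshold points $z_u',z_u'',z_s$ of (\ref{points}) with the equilibria to which the relevant bounding process converges. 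No new probabilistic estimate is needed: the hard analytic work --- the second-moment control of the linear process --- has already been spent in Corollary~\ref{cor upper bound}.

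Items 1 and 3 are immediate. For Item 1 the hypothesis $\beta<\bar d\Delta^{-1}\phi(1)^{-1}$ is \emph{verbatim} the hypothesis of Corollary~\ref{cor upper degree}, so the degree upper bound gives the uniform fast-extinction estimate with an $e^{-C_\eps N}$ rate for every initial $z$. For Item 3 the hypothesis $\bar d\gamma^{-1}\phi(0)^{-1}<\beta$ is exactly that of the second part of Corollary~\ref{cor lower bound}, and the bottleneck lower bound gives persistence above $z_s-\eps$ for every $z>\eps$ on an exponentially long window; here $z_s$ is the stable equilibrium of the bounding (bottleneck-reduced) process, i.e. of the mean-field dynamics with $\beta$ replaced by $\beta\bar d^{-1}\gamma$.

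The substance lies in Item 2, where one must glue an upper bound for small initial fractions to a lower bound for large ones. For the extinction half ($z<z_u'-4\eps$) I would invoke item 2 of Corollary~\ref{cor upper bound}. Its parameter window is $\bar d\rho_A^{-1}\phi(1)^{-1}<\beta<\bar d\rho_A^{-1}\phi(0)^{-1}$: the upper constraint is the one in the theorem, while the lower constraint follows from $\beta>\bar d\gamma^{-1}\beta^*$ together with $\gamma\leq\rho_A$ and $\phi(1)^{-1}\leq\beta^*$ from (\ref{inequalities}). Since the corollary's threshold $z^{*}$ solves $\phi(z^{*})=\beta^{-1}\bar d\rho_A^{-1}$, it coincides with $\sqrt{z_u'}$ by the definition of $z_u'$ in (\ref{points}); a one-line estimate using $\sqrt{z_u'}\leq 1$ then shows $z<z_u'-4\eps$ implies $z<(z^{*}-2\eps)^2$, so the corollary applies and yields the polynomial $C_\eps N^{-1/2}$ bound. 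For the persistence half ($z>z_u''+\eps$) I would invoke the first part of Corollary~\ref{cor lower bound}, whose hypothesis $\bar d\gamma^{-1}\beta^*<\beta$ is again exactly the left inequality of the theorem; it gives persistence above $z_s-\eps$ with the $e^{-C_\eps N}$ rate, where $z_u''$ and $z_s$ are the unstable and stable equilibria of the same bottleneck-reduced dynamics.

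The main obstacle --- really the only nontrivial point --- is organizational rather than probabilistic: one must check that the three regimes are mutually consistent and that the threshold points are correctly ordered, $z_u'\leq z_u''<z_s$ as asserted in (\ref{points}). This ordering, together with the fact that the two halves of Item 2 leave an honest undetermined gap $(z_u',z_u'')$, is an unavoidable consequence of the bounds coming from \emph{different} spectral quantities ($\rho_A$ for the upper bound, $\gamma$ for the lower bound): I would verify $z_u'\leq z_u''$ by comparing $\phi(\sqrt{z_u'})=\beta^{-1}\bar d\rho_A^{-1}$ with the defining equation of $z_u''$ and using $\gamma\leq\rho_A$ with the monotonicity of $\phi$. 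I would also keep careful track of the two distinct decay rates feeding into Item 2 --- polynomial $N^{-1/2}$ from the second-moment upper bound and exponential $e^{-C_\eps N}$ from the lower bound --- since they are genuinely different and appear as such in the statement.
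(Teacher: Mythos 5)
Your proposal is correct and is essentially the paper's own proof: the paper disposes of Theorem \ref{absorbing general} in a single sentence (``combine Corollaries \ref{cor lower bound}, \ref{cor upper degree} and \ref{cor upper bound} using the inequalities (\ref{inequalities})''), and you have simply supplied the routing of each regime to its corollary, the implication $\bar d\gamma^{-1}\beta^*<\beta\Rightarrow\bar d\rho_A^{-1}\phi(1)^{-1}<\beta$, and the $(z^*-2\eps)^2\geq z_u'-4\eps$ computation that the paper leaves implicit. One remark: in Item 3 and in the persistence half of Item 2 you identify $z_u''$ and $z_s$ as the equilibria of the bottleneck-reduced dynamics, i.e.\ $z_u(\beta\gamma\bar d^{-1})$ and $z_s(\beta\gamma\bar d^{-1})$ as delivered by Corollary \ref{cor lower bound}, whereas the display (\ref{points}) writes $z_u(\Delta\bar d^{-1}\beta)$ and $z_s(\Delta\bar d^{-1}\beta)$; since $z_u(\cdot)$ is decreasing and $\gamma\leq\Delta$, the literal statement would claim persistence from a strictly lower threshold (and up to a strictly higher plateau) than the lower bound actually provides, so your reading amounts to a silent --- and necessary --- correction of what appears to be a typo in the theorem's statement.
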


{\remark\label{relaxed} Item 1 of Theorem \ref{absorbing general} can be relaxed using item 1 of Corollary \ref{cor upper bound}. In this case it reads
\begin{enumerate}
\item \emph{(bis) if} $\beta<\bar{d}\rho_A^{-1}\phi(1)^{-1}$\emph{, then }$\forall z$\emph{, }$$\P_z\left(\sup\limits_{t\geq T_\eps}Z(t)>\eps \right)<C_\eps N^{-1/2}.$$
\end{enumerate}
On the one hand this improves the bound on $\beta$ making it scaling with the inverse of the spectral radius instead of the maximum degree ($\rho_A\leq\Delta$). On the other hand, under this regime, the result becomes less strong in probability: Corollary \ref{cor upper bound} provides an upper-bound on the process $Z(t)$ only w.h.p.\footnote{A family of events $E_N$ is said to hold asymptotically almost surely (a.a.s.) if $\P[E_N]=1-o(1)$, with high probability (w.h.p.) if $\P[E_N]=1-o(N^{-c})$, for some $c>0$ and with overwhelming probability (w.o.p.) if $\P[E_N]=1-o(e^{-N})$, as $N\to\infty$} instead Theorem \ref{absorbing general} ensures it w.o.p.$^{1}$ Therefore, depending on the situation, it could be better to use one result or the other one. For example in deterministic regular graphs $\rho_A=\Delta$, therefore Theorem \ref{absorbing general} gives exactly the same bound of Corollary \ref{cor upper bound} with a stronger result in probability, thus the original result of Theorem \ref{absorbing general} is preferable. On the contrary in many random graphs, the graph parameters may converge to some constants (or functions of $N$) w.h.p., therefore any result w.o.p. is in any case lost. In these cases it is better to use the relaxed bound in Remark (\ref{relaxed}) since it provides a better threshold, with the same strength in probability.}

\section{Analytical and numerical results on specific topologies}\label{section specific} 
In this section we discuss the application of Theorem \ref{absorbing general} to specific sequences of graphs with increasing order. For the sake of simplicity we will always stick, in this section, to the case when $\phi(z)=z$ so that $\beta^*=4$ and only the two cases 1. and 2. in the theorem can possibly show up. 

We show that for sequences of graphs which we call regularly expansive (and which include important random graphs examples like Erd\H{o}s-R\'{e}nyi graphs and random configuration models) the theorem guarantees  the existence of a phase transition with respect to the parameter $\beta$ (which is sharp but from a multiplicative constant) between a situation where fast extinction always occurs and one where both fast extinction and long permanence may show up in dependence of the initial condition. 

We then present a number of numerical simulations\footnote{Realized with \textsc{MATLAB} with a sample size of 500 simulations for each level.} on these examples and also on other examples (Bar{\'a}basi-Albert graphs and grids) for which Theorem \ref{absorbing general} does not give any information. In these simulations, a sample path is called successful if it is not absorbed after a time $T=100$.
In the performed simulations, we notice that all successful sample paths, exhibit a fraction $z(T)$ of agents with the asset at time $T$ greater than $0.5$. This remark, on the one hand enforces the hypothesis of the presence of a phase transitions between a regime where the process is absorbed and another regime where the asset diffuses deeply in the population, on the other hand it makes our definition of success more consistent.

We recall below the graph parameters which need to be computed (or at least estimated) in Theorem \ref{absorbing general}:
\begin{itemize}
\item $\bar d$ and $\Delta$ are, respectively, the average and the largest degree of the graph;
\item $\gamma$ is the bottleneck ratio, $\rho_A$ is the spectral radius of the adjacency matrix.
\end{itemize}
and that $\gamma\leq\bar d\leq \rho_A \leq \Delta$, from inequalities (\ref{inequalities}).
A sequence of graphs $ G_N$ with increasing number $N$ of vertices is called $(a,e_1,e_2)$-regularly expansive if, for every $N$,
$$\bar d\Delta^{-1}\geq a\quad\quad\text{and}\quad\quad e_1\leq  \bar d\rho_A^{-1}\leq \bar d\gamma^{-1}\leq e_2.$$
Notice that, because of (\ref{inequalities}), we can always choose $e_1\geq a$.

Theorem \ref{absorbing general} can be reformulated for such families of graphs as follows:

\begin{corollary}\label{expansive}
Assume that $\phi(z)=z$ and that $ G_N$ is a $(a,e_1,e_2)$-regularly expansive sequence of graphs.
Let $z_u'\leq z_u''< z_s$ be points in $[0,1]$ defined by 
\be\label{points 2}z_u'=\beta^{-2}e_1^2,\quad z_u''=\frac{1}{2}-\frac{1}{2}\sqrt{1-\frac{4e_2}{\beta}},\quad z_s=\frac{1}{2}+\frac{1}{2}\sqrt{1-\frac{4e_1}{\beta}}.\ee
For every $\eps >0$ we can find $C_\eps>$ and  $T_{\eps}>0$ for which the following holds true, if $N$ is sufficiently large,
\begin{enumerate}
\item if $\beta<a$, then $\forall z$, $$\P_z\left(\sup\limits_{t\geq T_\eps}Z(t)>\eps \right)<e^{-C_\eps N};$$

\item if $\beta>4 e_2$,  then $\forall z<z_u'-4\eps$, $$\P_z(\exists t\geq T_{\epsilon}\,|\, Z (t)>\epsilon)\leq C_{\epsilon}N^{-1/2};$$
and $\forall z>z_{u}''+\eps$,
$$\P_z\left(\inf\limits_{t\in [T_\eps, T_\eps +e^{C_\eps N}]}Z(t)<z_s-\eps \right)<e^{-C_\eps N}.$$
\end{enumerate}
\end{corollary}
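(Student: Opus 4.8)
The plan is to specialize Theorem \ref{absorbing general} to the case $\phi(z)=z$ and to a $(a,e_1,e_2)$-regularly expansive sequence of graphs, verifying that the parameter thresholds and the equilibrium points listed in (\ref{points}) reduce to the explicit quantities in (\ref{points 2}) under these hypotheses. Since Theorem \ref{absorbing general} has already been established, the present corollary is essentially a translation exercise: I would first replace every occurrence of the graph-dependent ratios $\bar d\Delta^{-1}$, $\bar d\rho_A^{-1}$ and $\bar d\gamma^{-1}$ by their uniform bounds $a$, $e_1$ and $e_2$ coming from the definition of regular expansiveness, and then compute the equilibrium points using the closed-form expressions (\ref{mean}) available for $\phi(z)=z$.

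First I would handle \emph{Item 1}. The hypothesis $\beta<a$ together with $a\leq \bar d\Delta^{-1}$ gives $\beta<\bar d\Delta^{-1}=\bar d\Delta^{-1}\phi(1)^{-1}$, since $\phi(1)=1$. This is exactly the condition of Item 1 of Theorem \ref{absorbing general}, so the fast-extinction conclusion transfers verbatim. Next I would address the first half of \emph{Item 2}, the initial-condition-dependent extinction for small $z$. The hypothesis $\beta>4e_2$ should be shown to imply the regime condition $\bar d\gamma^{-1}\beta^*<\beta<\bar d\rho_A^{-1}\phi(0)^{-1}$ of Item 2 of the theorem. Here $\beta^*=4$ and $\phi(0)=0$ (so $\phi(0)^{-1}=+\infty$ and the upper constraint is vacuous, consistent with the understanding that case 3 never appears); the lower constraint reduces to $\bar d\gamma^{-1}\cdot 4<\beta$, which follows from $\bar d\gamma^{-1}\leq e_2$ and $\beta>4e_2$. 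For the point $z_u'$, the defining relation $\phi(\sqrt{z_u'})=\beta^{-1}\bar d\rho_A^{-1}$ becomes $\sqrt{z_u'}=\beta^{-1}\bar d\rho_A^{-1}$, hence $z_u'=\beta^{-2}(\bar d\rho_A^{-1})^2$; bounding $\bar d\rho_A^{-1}\leq e_1$ yields $z_u'\leq\beta^{-2}e_1^2$, which is the value asserted in (\ref{points 2}). (Strictly, the monotone dependence of the threshold on the underlying parameter must be invoked to see that replacing $\bar d\rho_A^{-1}$ by its upper bound $e_1$ preserves the direction of the inequality in the probability bound.)

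For the second half of \emph{Item 2}, the long-persistence statement for large $z$, I would use the closed-form equilibria (\ref{mean}) with argument $\Delta\bar d^{-1}\beta$ in place of $\beta$. Substituting into (\ref{mean}) gives $z_u''=z_u(\Delta\bar d^{-1}\beta)=\tfrac12-\tfrac12\sqrt{1-4\bar d\Delta^{-1}\beta^{-1}}$ and similarly for $z_s$; then the regular-expansiveness bounds $\bar d\Delta^{-1}\geq a$ and $e_1\leq\bar d\rho_A^{-1}\leq\bar d\gamma^{-1}\leq e_2$ must be threaded through so that the stated $z_u''$ and $z_s$ in (\ref{points 2}) are valid (one-sided) bounds on the true equilibria, in the direction that makes the conclusion of Theorem \ref{absorbing general} still hold. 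The main obstacle — and the only genuinely non-cosmetic step — will be bookkeeping the \emph{direction} of each inequality: one must check that enlarging or shrinking the effective $\beta$-parameter moves $z_u$ and $z_s$ the correct way (so that the event bounds remain upper bounds), using the monotonicity of $z_u(\cdot)$ and $z_s(\cdot)$ as functions of their argument that follows from Lemma \ref{lemma zeroes}. Once these monotonicity directions are pinned down, the corollary follows immediately by invoking Items 1 and 2 of Theorem \ref{absorbing general} with the substituted parameters.
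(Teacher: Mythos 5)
Your approach is exactly the paper's: the published proof is a one-line appeal to Theorem \ref{absorbing general}, the explicit formulas (\ref{mean}) for $\phi(z)=z$, and the inequalities (\ref{inequalities}), which is precisely the translation exercise you carry out, and your verifications of the regime conditions ($\beta<a\le\bar d\Delta^{-1}$, $\beta>4e_2\ge 4\bar d\gamma^{-1}=\bar d\gamma^{-1}\beta^*$, $\phi(0)^{-1}=+\infty$) are correct. One directional slip should be fixed: the definition of regular expansiveness gives $e_1\le\bar d\rho_A^{-1}$, i.e.\ $e_1$ is a \emph{lower} bound on $\bar d\rho_A^{-1}$, so the true threshold satisfies $\beta^{-2}(\bar d\rho_A^{-1})^2\ge\beta^{-2}e_1^2$, not ``$z_u'\le\beta^{-2}e_1^2$'' as you wrote; fortunately this is the favorable direction, since $z<\beta^{-2}e_1^2-4\eps$ then implies $z$ lies below the theorem's threshold minus $4\eps$ and the extinction bound transfers. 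The same care applies to $z_u''$ and $z_s$: since $z_u(\cdot)$ is decreasing and $z_s(\cdot)$ is increasing in the effective parameter, replacing the graph ratios by $e_2$ (resp.\ $e_1$) must produce an upper bound on the unstable equilibrium and a lower bound on the stable one, which is what makes the conclusions one-sided in the right way.
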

\begin{proof}
It is an immediate consequence of Theorem \ref{absorbing general}, of the explicit formulas (\ref{mean}), and the inequalities (\ref{inequalities}).
\end{proof}

Notice that, in order for being sure that there exists a transition with respect to the parameter $\beta$ from regime 1. to regime 2. we must have $a>0$. Instead the condition $e_1>0$ insures that there is the transition depending on the initial condition in regime 2.

Fundamental examples of regularly expansive graphs are the random Erd\H{o}s-R\'{e}nyi graphs and the random configuration model with fixed and bounded degree distribution. Below we recall the exact definition of the two ensembles and we discuss the application of Corollary \ref{expansive} to them.

\subsection{Erd\H{o}s-R\'{e}nyi graph}

The Erd\H{o}s-R\'{e}nyi graph $G(N,p)$ is the first random graph model introduced in 1959 in \cite{er}. $G(N,p)$ has $N$ nodes and each edge $(u,v)$ is present with a probability $p\in(0,1)$, independent on the other edges. Therefore the degree of each node is distributed as a binomial random variable with parameters $N-1$ and $p$, which means that the expected average degree is $(N-1)p$. Standard concentration results \cite{draief} show that, w.h.p., as $N\to\infty$,
$$\bar d\sim\Delta\sim Np.$$
We recall that w.h.p. means that i.e. the probability of not having the concentration above converges asymptotically to $0$ as some (negative) power of $N$, as $N\to\infty$. Here we will restrict to the case where $\frac{\ln{N}}{Np}\to 0$. Under this regime, it is proved in \cite{draief} (the first two results) and in \cite{chung} (the last one) that w.h.p. the graph is connected, and that 
$$\bar{d}\gamma^{-1}=2+o(1)\quad\quad\text{and}\quad\quad \bar{d}\rho_A^{-1}=1+o(1).$$ 
We can thus say that w.h.p. $G(N,p)$ is $(1-\delta, 1-\delta , 2+\delta)$-regularly expansive for any $\delta >0$ if $N$ is sufficiently large.

\begin{figure}[t]
\centering
\subfloat[Success depending on $\beta$\label{exp beta} ($z_0=1$).]{\includegraphics[scale=.5]{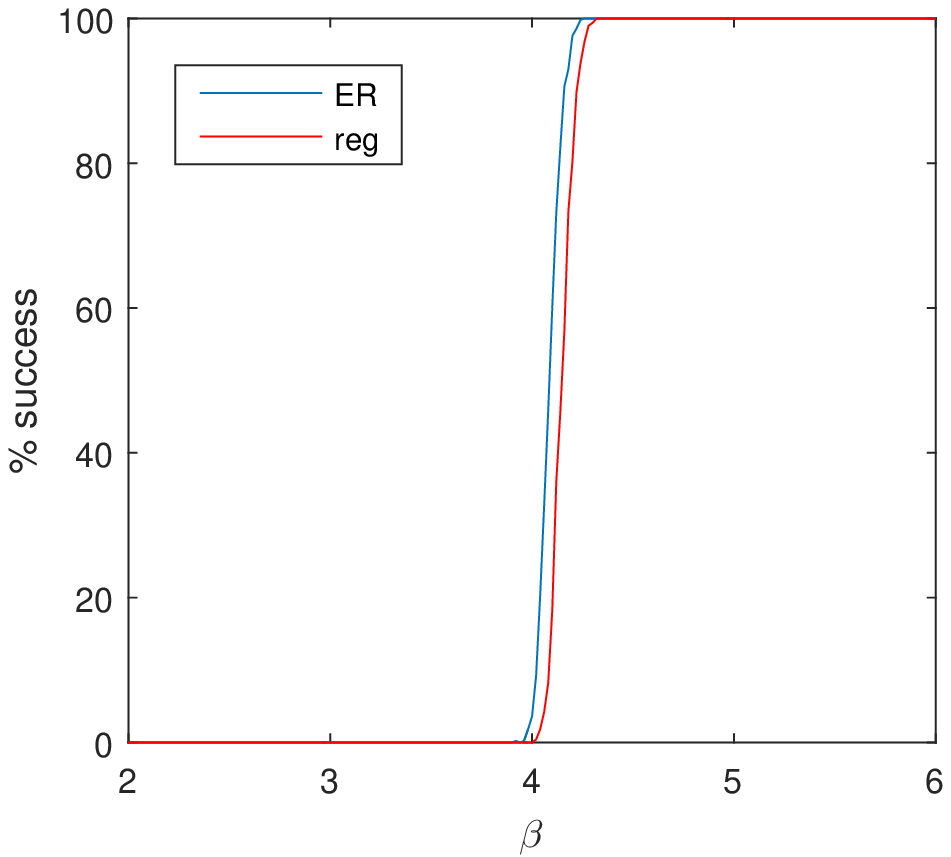}}\quad\subfloat[Success depending on $z_0$\label{exp z0} ($\beta=10$).]{\includegraphics[scale=.5]{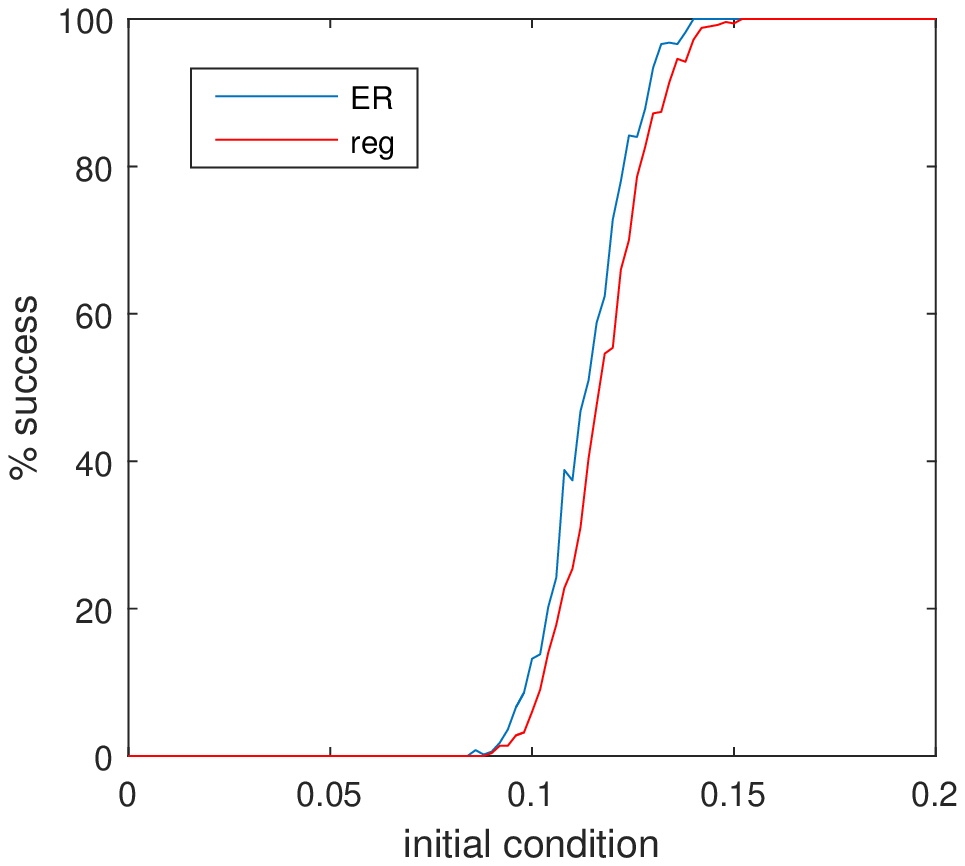}}
\caption{Simulations on random graphs with $N=1000$. In ER $p=0.05$, in the regular configuration model $\bar d=20$.}
\label{expander}
\end{figure}

Therefore the phase transition depending on $\beta$ in the case of connected ER graphs is sharp (but from a multiplicative constant). From the numerical simulations in Fig. \ref{exp beta}, it seems that the multiplicative constant actually shrinks to $1$, so that the phase transition is sharp. Observing Fig.  \ref{er2} we can also notice that the threshold in the initial condition seems to become sharper as $N\to \infty$, even if the two bounds $z_u'$ and $z_u''$  from Corollary \ref{expansive} does not converge to the same point. Therefore it looks like that the behavior of the model on ER graphs is really close to the results of mean field analysis in Theorem \ref{absorbing} with a sharp transition between the phases described in items $1$ and $2$ close to $\beta^*$ and a sharp threshold (as $N\to \infty$) for the initial condition $z_t$ in the phase described in item $2$.
\begin{figure}[h!]
\centering
\includegraphics[height=6cm]{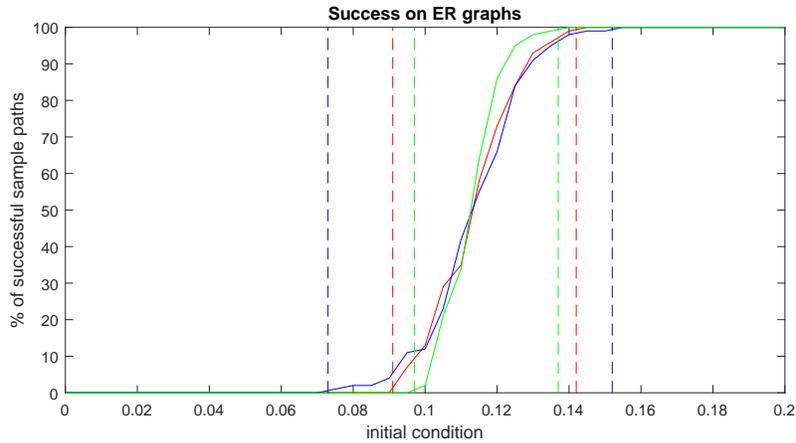}
\caption{Simulations on ER graphs with $\beta=10$ and $n=800$ (blue), $n=1200$ (red) and $n=1600$ (green). The vertical dotted lines are the estimated thresholds $z_u'$ and $z_u''$. As $N$ increases the transition seems to be sharper. Notice that the analytical thresholds from Corollary \ref{expansive} are $z_u'=0.01$ and $z_u''\simeq0.2764$.}\label{er2}
\end{figure}

\subsection{Configuration model}

Consider a probability distribution $q_d$ over the positive integers with $q_d=0$ if $d\leq 2$ and if $d> d_{max}$ is sufficiently large. The configuration model $G(N,{\bf q})$ is the sequence of symmetric random graphs with $N$ nodes whose degrees are independent  r.v.'s distributed according to $q_d$ and where connection is established through a random permutation (see \cite{durret} for details). Notice that, by construction, $3\leq\bar d\leq d_{max}$ and $\Delta\leq d_{max}$.

Moreover, from \cite{durret}, $\exists \alpha$ such that $\gamma\geq\alpha>0$ for all finite $N$ and w.h.p. as $N\to\infty$. Consequently, w.h.p.  $G(N,{\bf q})$ is $(3/d_{max}, 3/d_{max}, d_{max}/\alpha)$-regularly expansive and Corollary \ref{expansive} can be applied. Moreover, given a distribution $\bf q$, one can obtain more refined estimations on $\bar d$ and $\rho_A$ using probabilistic tools and taking advantage of the results from \cite{chung} and \cite{kumar}. Finally, even in this case, through numerical simulations it is possible to conjecture a behavior similar to ER graphs, as one can see in Fig. \ref{exp beta} (as the phase transition depending on $\beta$ is considered) and \ref{exp z0} (for the thresholds on the initial condition $z'_u$ and $z''_u$).

\subsection{Analysis on other graphs}
In the following section we consider some examples of non-expansive graph topologies where Corollary \ref{expansive} can not infer the presence of a phase transition. Nevertheless we will show, through numerical simulations, that such phenomena (or at least some of them) do take place.

The Bar{\'a}basi-Albert graph is a random graph model introduced in 1999 to represent social networks. Starting from an initial connected graph, each time a node is added and it is connected to $m$ existing nodes with a probability proportional to their degrees, until there are $N$ nodes (to see a precise definition of this model see \cite{ab}). 

\begin{figure}[h!]
\centering
\subfloat[Success depending on $\beta$\label{ab beta} ($z_0=1$).]{\includegraphics[scale=.50]{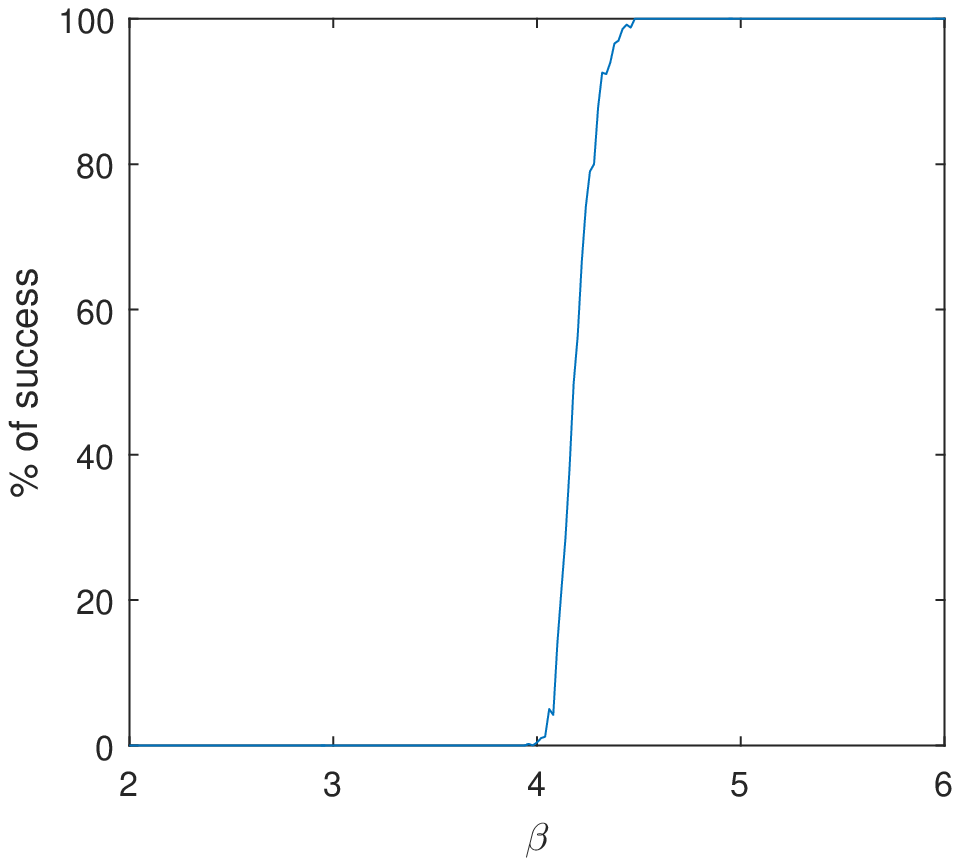}}\quad\subfloat[Success depending on $z_0$\label{ab z0} ($\beta=1$).]{\includegraphics[scale=.50]{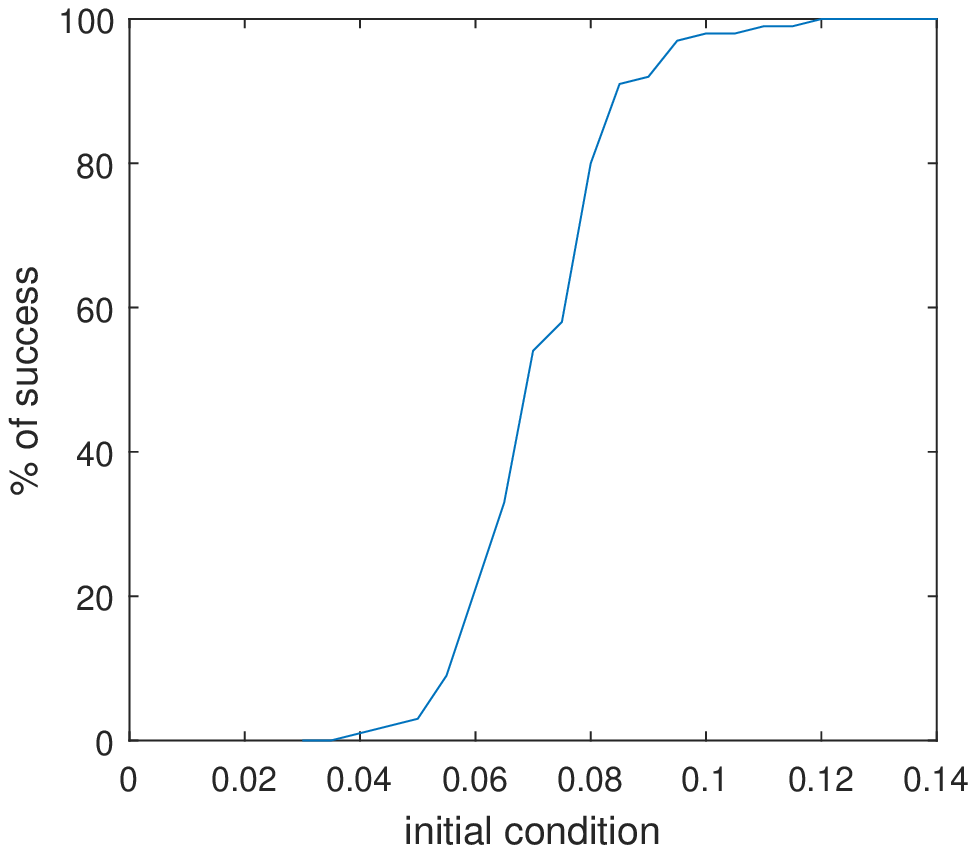}}
\caption{Simulations of the dynamics on AB graph with $N=1000$.}
\label{ab}
\end{figure}

This algorithm gives back a graph whose degree distribution is proved to follow asymptotically a power-law \cite{ab} (in particular $\P[d_v=k]\propto k^{-3}$). As $N\to\infty$ it is immediate to verify that $\bar{d}=m+o(1)$ (due to construction). On the other hand it is proved in \cite{chung} that w.h.p.
$$\Delta=n^{1/2}(1+o(1)),\quad \rho_A=n^{1/4}(1+o(1)),\quad 0<\alpha\leq \gamma=O(1).$$
Therefore, Bar{\'a}basi-Albert is only $(0, 0, m/\alpha +\delta)$-regularly expansive and Corollary \ref{expansive} can not infer any clue on transitions neither on $\beta$, nor on the initial condition.

Nevertheless, from Fig. \ref{ab beta} and \ref{ab z0} we can appreciate that it seems to exists a phase where success and failure are both possible depending on the initial condition. However, from our simulations, transition between those two behaviors seems to be smooth, even increasing $N$ (see Fig. \ref{ab z0}).

Another class of graphs on which Theorem \ref{absorbing general} and its Corollary do not provide any analytical proof of the existence of phase transitions is the class of the $k$-dimensional toroidal grids. We recall that a $1$-torus is a cyclic graph $C_n$ and a $k$-torus can be defined as the cartesian product between $k$ $1$-tori with $n^{1/k}$ nodes each \cite{vizing}. Therefore, given a dimension $k$, $\gamma\sim c N^{-k/2}$ for some constant $c$, while $\bar d=2k$ is constant, so that $\bar d/\gamma$ always diverges (a similar behavior occurs in hypercubes, where instead $\gamma= 1$ and $\bar d=\ln N$). Nevertheless, even in this case, simulations (see Fig. \ref{tori}) show the existence of an intermediate phase where the behavior of the model presents a transition depending on the initial condition. In this case the transition seems to be sharp, as one can appreciate in Fig. \ref{tori z0}.

\begin{figure}[h]
\centering
\subfloat[Success depending on $\beta$\label{tori beta} ($z_0=1$).]{\includegraphics[scale=.50]{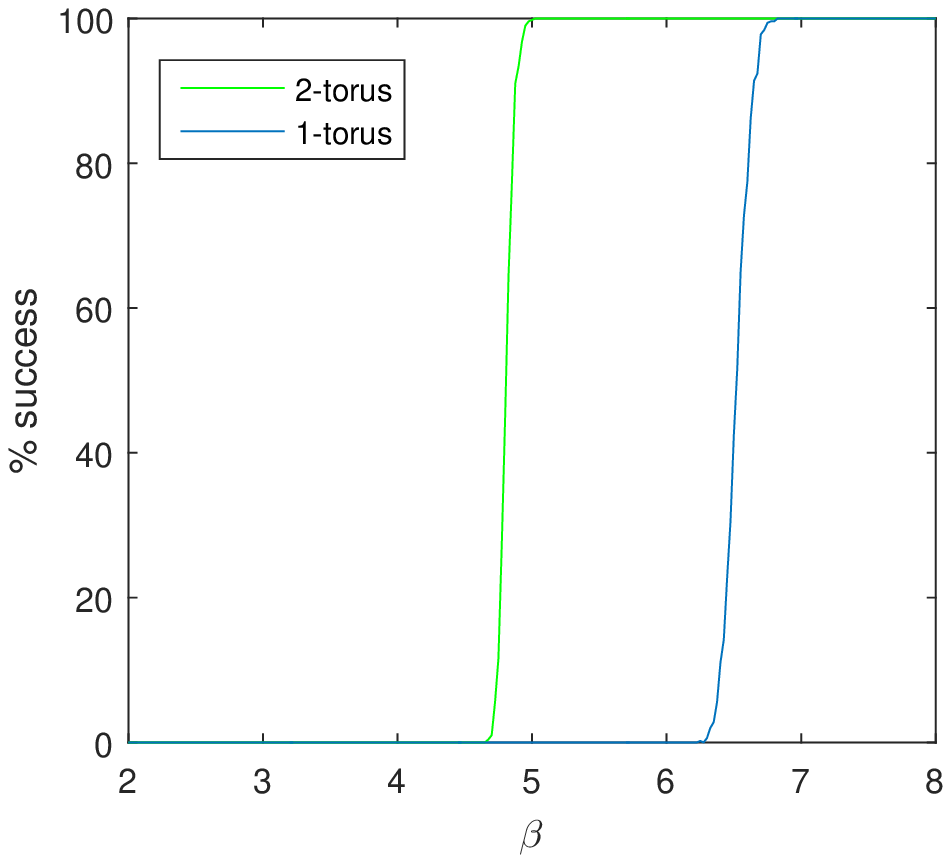}}\quad\subfloat[Success depending on $z_0$\label{tori z0} ($\beta=10$).]{\includegraphics[scale=.50]{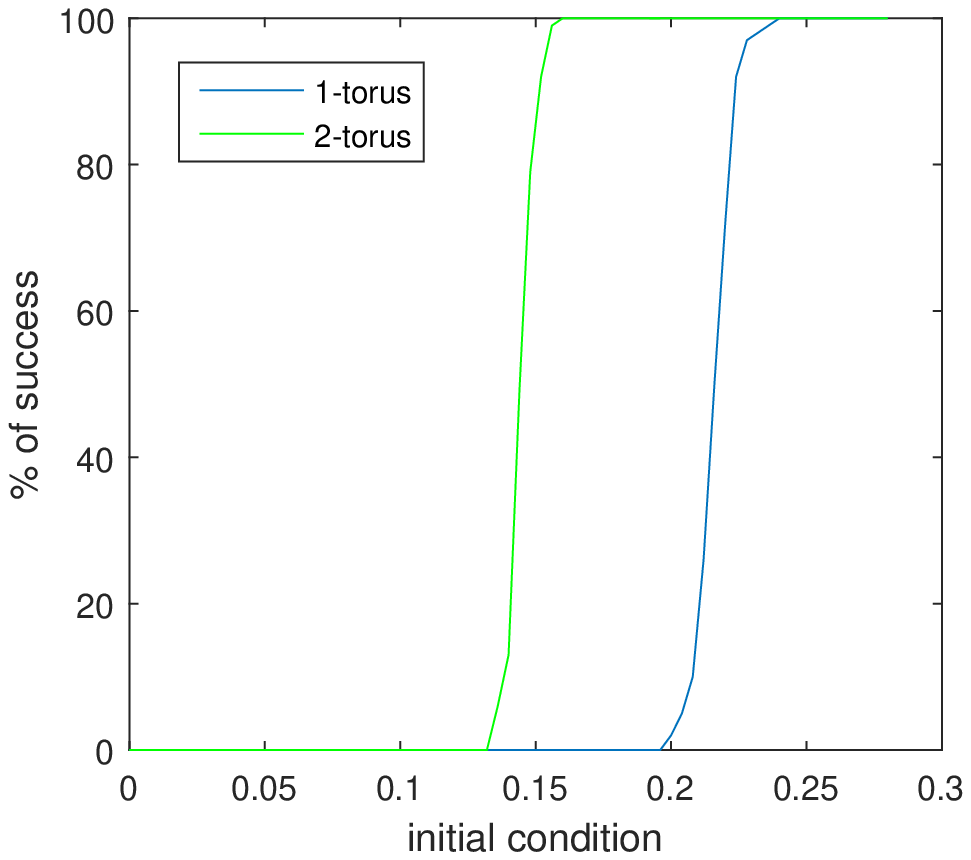}}
\caption{Simulations of the dynamics on $d$-tori with $N=1024$.}
\label{tori}
\end{figure}

Moreover it may be interesting to examine the simulations in Fig. \ref{tori}. From these simulations it seems that the connectivity of the graph might play an important role in the value of the various thresholds: a stronger persuasion strength (Fig. \ref{tori beta}) as well a higher initial condition (Fig. \ref{tori z0}) seem to be needed to insure the spreading of the asset in graphs with lower connectivity (in our case as $k$ decreases).

\section{Conclusions}
In this work, we deepened the analysis of the network dynamics for the diffusion of the adoption of a new technological item like a smartphone application or a PC program we proposed in \cite{mtns}. The main novelty of this dynamics is in the fact that the spread of such ``light choices'  is driven by a mechanism whose strength depends on the global diffusion of the item in the community, coupled with a spontaneous regression drift. 

In \cite{mtns}, we also analyzed the behavior of the system in the very simple case of a complete graph. In particular, in Theorem \ref{absorbing}, we exposed the presence of a phase transition depending on the gossip strength parameter and we highlighted the existence of an intermediate regime where, the diffusion and the permanence of the asset in the community depends on the size of the initial fraction of population possessing that asset. The presence of this last regime was the main novelty of this model with respect to classical epidemic models and it is coherent both with other models with similar driving mechanism and with intuition. 

In this work we proved Theorem \ref{absorbing general}, extending this results to the case of a generic graph, relating the presence of the phase transition (and the intermediate regime) not only to the gossip strength parameter, but also to some features of the graphs such as the degrees, the spectral radius of the adjacency matrix and the bottleneck ratio. In particular, this extension provides a result similar to case of a complete graph for a large family of graphs, in which  ER graphs and random configuration models with fixed and bounded degree distribution are included. 

Finally, we presented some applications to given topologies, both analytical and via numerical simulations, showing that such phenomena seems to be present even in other network topologies, which are not covered by the analytical results of Theorem \ref{absorbing general} (e.g. scale-free graphs, regular grids).

\end{document}